\DeclareMathOperator{\Prob}{\mathbb{P}}
\DeclareMathOperator{\E}{\mathbb{E}}
\def\BState{\State\hskip-\ALG@thistlm}
\newtheorem{proposition}{Proposition}
\newtheorem{lemma}{Lemma}
\newtheorem{theorem}{Theorem}
\newtheorem{remark}{Remark}
\newtheorem{definition}{Definition}
\newcommand{\copyrightstatement}{
	\begin{textblock}{0.84}(0.08,0.01)    
		\noindent
		\footnotesize
		\copyright 2019 IEEE. This is the authors' version of the article. Personal use of this material is permitted. Permission from IEEE must be obtained for all other uses, in any current or future media, including reprinting/republishing this material for advertising or promotional purposes, creating new collective works, for resale or redistribution to servers or lists, or reuse of any copyrighted component of this work in other works.
	\end{textblock}
}
\begin{document}
\copyrightstatement
\title{Dynamic Binary Countdown for Massive IoT Random Access in Dense 5G Networks\vspace{0.3cm}}

\author{Mikhail~Vilgelm,~\IEEEmembership{Student Member,~IEEE,}  Sergio~Rueda~Li\~nares, and~Wolfgang~Kellerer,~\IEEEmembership{Senior Member,~IEEE} 
\thanks{M. Vilgelm, S. Rueda Li\~nares and W. Kellerer are with the Chair
of Communication Networks, Technical University of Munich, Munich,
Germany. This work was in part supported by the German Research Foundation (DFG) grant KE1863/5-1 as part of the priority program SPP 1914 Cyber-Physical Networking. Corresponding author: M. Vilgelm (mikhail.vilgelm@tum.de).}}

\maketitle

\begin{abstract}
Massive connectivity for Internet of Things applications is expected to challenge the way access reservation protocols are designed in 5G networks. Since the number of devices and their density are envisioned to be orders of magnitude larger, state-of-the-art access reservation, Random Access (RA) procedure, might be a bottleneck for end-to-end delay. This would be especially challenging for burst arrival scenarios: Semi-synchronous triggering of a large number of devices due to a common event (blackout, emergency alarm, etc.). In this article, to improve RA procedure scalability, we propose to combine Binary Countdown Contention Resolution (BCCR) with the state-of-the-art Access Class Barring (ACB). We present a joint analysis of ACB and BCCR and apply a framework for treating RA as a bi-objective optimization, minimizing the resource consumption and maximizing the throughput of the procedure in every contention round. We use this framework to devise dynamic load-adaptive algorithm and simulatively illustrate that the proposed algorithm reduces the burst resolution delay while consuming less resources compared to the state-of-the-art techniques.
\end{abstract}

\begin{IEEEkeywords}
5G NR; RACH; M2M; Random Access; Contention Resolution;
\end{IEEEkeywords}

\section{Introduction}

\IEEEPARstart{T}{he} Internet of Things (IoT) encompasses the set of technologies enabling arbitrary physical objects "to share information and to coordinate decisions" as expressed in~\cite{al2015internet}. IoT involves ubiquitous deployment of sensors, actuators and other computing devices connected to the Internet and placed on any physical object. IoT applications could be enabled by the massive Machine-to-Machine (M2M) communication, the support of which is one of the major design goals for 5G networks~\cite{osseiran2014scenarios}. Massive M2M involves large number of dense sensor network deployments where the key design requirements are low-cost devices, low-energy consumption and wide coverage areas. Examples of such use cases are smart buildings (heating, cooling, etc.), smart cities (parking sensors, public lighting, etc.), logistics and fleet monitoring or smart agriculture (watering, crop monitoring, etc.)~\cite{al2015internet}.

Network designs for massive connectivity inevitably have to address two challenges: to increase capacity and to provide low-overhead connectivity. The challenge of capacity is envisioned to be addressed via increasing network densification (small cells, etc.), increasing spectral efficiency (massive MIMO), and increasing bandwidth. On the other hand, the overhead of auxiliary procedures, especially access reservation, becomes increasingly critical due to the small packet sizes and sporadic (event-driven) transmission patterns of typical IoT applications~\cite{3gpp37.868,al2015internet}. Hence, designing efficient protocols and access reservation procedures is one of the top priority enablers of massive connectivity.

For now, as 5G systems are seen as an evolution of existing LTE-Advanced standard, the access reservation in 5G New Radio (NR) is also based on LTE-Advanced Random Access (RA) procedure. Due to the contention-based nature of Random Access Channel (RACH), which is modeled as multi-channel slotted ALOHA, its scalability becomes a major bottleneck for massive M2M support~\cite{hasan2013random}. Under certain conditions, a significant portion of this massive number of user equipments (UEs) in a given cell may attempt to connect quasi-simultaneously to the next generation eNodeB (gNB), activating themselves over a very short period of time \cite{3gpp37.868}. This scenario is commonly referred to as \textit{burst arrival scenario}, and is likely to occur in emergency situations where a great number of sensors and other IoT devices respond to an unforeseen event, e.g., a fire outbreak activating a large group of sensors or simultaneous attempt to re-connect to the network immediately after power blackout. Burst arrivals have been a known trouble for random access protocols~\cite{firstBatch1988}, and have been confirmed to cause prohibitively long delays for RACH as well~\cite{hasan2013random}.

The standardized way to handle burst arrivals in LTE and NR RA is Access Class Barring (ACB) and its variations~\cite{3gpp37.868}. The idea behind it is to ``smoothen'' the burst by probabilistically delaying the access of some UEs. If the access probability is adjusted dynamically according to the load, ACB becomes a powerful tool for decreasing the delay~\cite{dacb}. However, its normalized throughput is still limited to $\approx1/e$ successful UEs per single Physical RACH (PRACH) preamble. To go beyond that, multiple approaches, which we review in Sec.~\ref{sec:relatedwork}, have been proposed. In our previous work, we have proposed an approach for dense networks, aiding the RA procedure with Binary Countdown Contention Resolution (BCCR)~\cite{PIMRCpaper}. Here, we extend and elaborate on the latter work by developing a practical resource-aware algorithm for joint dynamic optimization of ACB and BCCR.

\subsection{Contributions}
\label{sec:contributions}
In this article, we study the joint operation of ACB and BCCR in dense networks under burst arrival scenario. The novel contributions of the manuscript are:
\begin{itemize}
\item A joint analysis of the RA procedure with both ACB for PRACH preamble transmissions and Binary Countdown Contention Resolution (BCCR) for connection request transmissions. We analyse the expected number of successful UEs in a single contention round, and use it for predicting the expected burst resolution time.
\item Since BCCR introduces an extra overhead, the trade-off between resource consumption and throughput must be studied in detail. We apply a resource-aware framework based on the bi-objective Pareto-optimization, maximizing the expected throughput and minimizing the expected resource consumption.
\item Finally, we use the joint analysis and the resource-aware framework to devise a practical Dynamic Binary Countdown -- Access Class Barring (DBCA) algorithm for resource-aware M2M burst resolution. We simulatively evaluate it against the state-of-the-art for a burst arrival scenario, and demonstrate that DBCA is capable to achieve lower delay while maintaining low resource consumption.
\end{itemize}
\subsection{Structure of the Article}
The remainder of the article is structured as follows. We review the legacy RA procedure with ACB operation and the related works in Sec.~\ref{sec:background}. Next, in Sec.~\ref{sec:recap}, we explain the proposed RA procedure with BCCR. We analyze the protocol under joint ACB and BCCR in Sec.~\ref{sec:analysis}. The analysis is then used to devise a DBCA algorithm in Sec.~\ref{sec:proposal}, and the algorithm's performance is benchmarked against state-of-the-art in Sec.~\ref{sec:evaluation}. Finally, we conclude the article with Sec.~\ref{sec:conclusions}.
\section{Background and Related Work}
\label{sec:background}
\subsection{RA Procedure}
The Random Access procedure is performed in contention rounds. As illustrated in Fig.~\ref{fig:protocol}(a), an arbitrary $i^\text{th}$ round starts with UEs retrieving the system information broadcast from the gNB. The broadcast contains the Physical Random Access Channel (PRACH) configuration, including PRACH resource allocation, Access Class Barring~(ACB)~\textit{access probability} which is denoted by $p_i\in(0,1]$ and the set of available preambles $\mathcal{M}$ for contention-based RA\footnote{Contention-free RA is typically used for handover and is outside of the scope of the present article.} with cardinality $M  = |\mathcal{M}|$.

Before every preamble transmission attempt, the UEs perform an \textit{ACB check} by internally drawing a random number from the uniform $[0,1)$ distribution and only access the PRACH if the number is smaller than $p_i$. Otherwise, they are considered barred and back-off until the next contention round to repeat the process. For the non-barred UEs, the so-called 4-way handshake takes place. It starts with the PRACH preamble transmission (MSG1), in which every UE randomly select one preamble out of the set $\mathcal{M}$ and transmits it on PRACH. If a particular preamble is transmitted by at least one UE, the preamble is denoted as activated. Typically, the gNB cannot distinguish how many UEs have activated a given preamble, so it responds with a Random Access Response (MSG2) to every activated preamble\footnote{In some cases, it is possible to recognize that a collision has occurred, or even to estimate how many UEs are collided~\cite{newCollision}. Using this information, contention resolution can be further improved, but generally we ignore the possibility of collision multiplicity estimation for conservativeness.}. MSG2 informs UEs about the resources allocated in the PUSCH for next RA procedure step, namely, the RRC Connection Request (MSG3). During MSG3 transmissions, the UEs which have previously selected the same preambles for MSG1 will collide, since the gNB is not able to decode multiple RRC requests simultaneously (no capture effect or multipacket reception possibility is assumed). If no collision occurs and the MSG3 is correctly received, the gNB responds with an RRC connection reply (MSG4), and the connection establishment is successfully concluded. In the case of collision, UE does not receive MSG4 and hence assumes that a collision has occurred.

In case of RACH overload, the situation in which more than one UE selects the same preamble becomes increasingly likely, leading to MSG3 collisions and missed Random Access Opportunities (RAOs). Since the number of offered RAOs is limited, RA procedure becomes a bottleneck for the system's throughput under high loads of connecting UEs and leads to substantial resource waste and high access delays.
\begin{figure}[t!]
	\centering
	\includegraphics[width=0.8\linewidth]{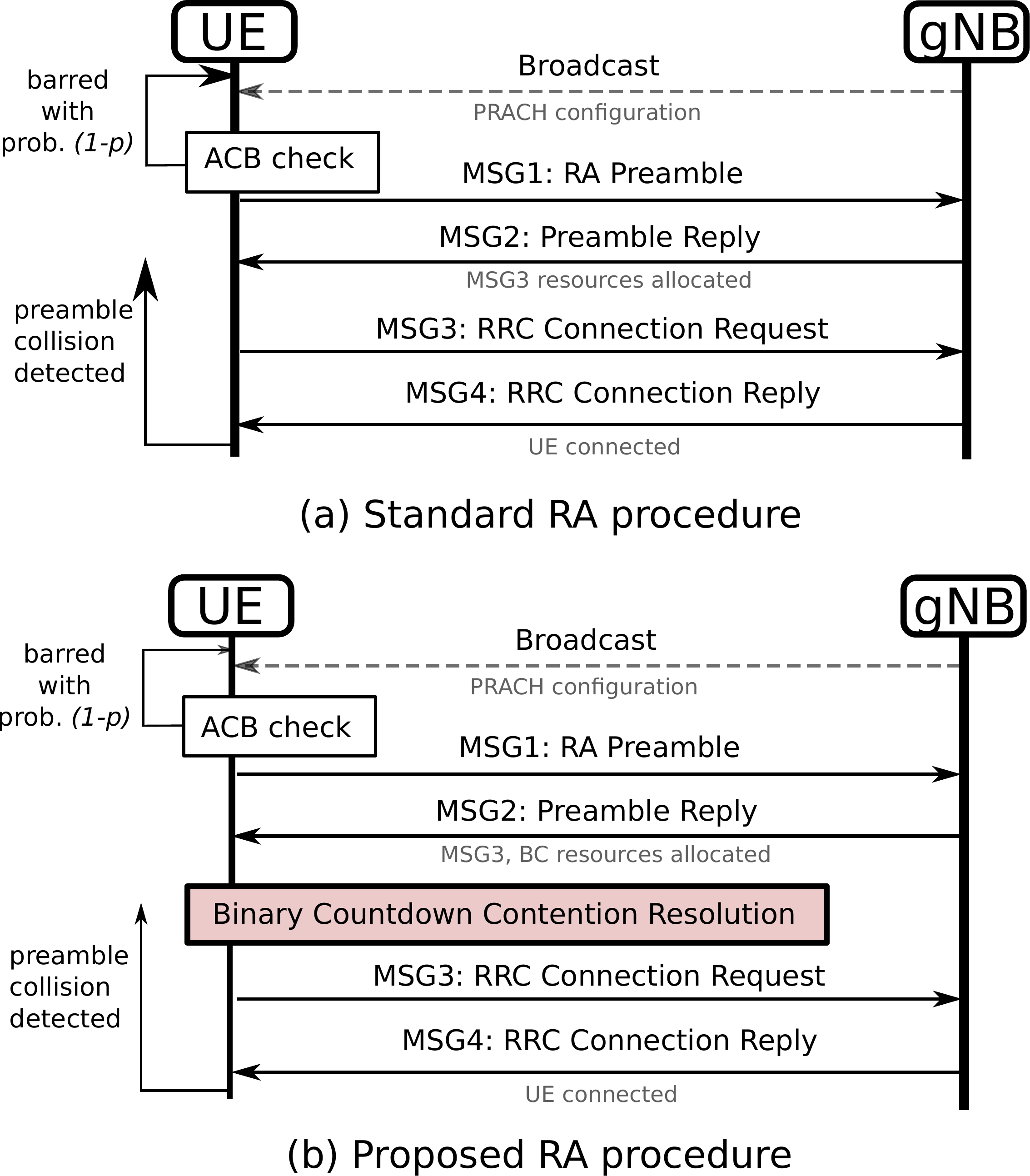}
	\caption{(a) Standard RA procedure with ACB for overload control; (b) Proposed RA procedure, where ACB for MSG1 overload control is combined with BCCR for MSG3 contention resolution.}
	\label{fig:protocol} 
\end{figure}\vspace{-0.2cm}

\subsection{Related Work}
\label{sec:relatedwork}
In this subsection, we review the related work by grouping it into three categories: modeling and analysis of the LTE RA, potential solutions for massive M2M support, and binary countdown for contention resolution in shared medium.

\subsubsection{Modeling and Analysis}

RA procedure is typically modeled as multichannel slotted ALOHA protocol, where the number of preambles available $M$ is considered equivalent to the number of available channels and PRACH periodicity is equivalent to a time slot~\cite{rom2012multiple,andreev2015understanding}. The performance of standardized LTE RA has been extensive studied for two major scenarios: steady-state, typically for independent Poisson arrivals,~\cite{tyagi2015impact,7021909,6786066,vilgelm2017impact,misic2016shout,gharbieh2016tractable}, and transient state for burst arrivals~\cite{wei2015modeling,jian2017random,koseoglu2016lower,cheng2015modeling,8048578,vilgelm2018icc}. Tyagi~\textit{et al.}~\cite{tyagi2015impact} study the back-off based LTE RA without access barring and evaluate the impact of re-transmission limit on the steady-state performance. In the follow-up work, the authors have extended the analysis to RA with ACB~\cite{7021909}. Steady-state M2M random access behavior has also been analyzed in the context of heterogeneous networks with relays in~\cite{6786066} and with intermediate aggregation in~\cite{vilgelm2017impact}. Additionally, related work has been evaluating the effects of power ramping~\cite{misic2016shout} and spacial distribution of the UEs~\cite{gharbieh2016tractable}.

While the above papers have focused on the average steady-state performance, Wei~\textit{et al.}~\cite{wei2015modeling} have presented a drift approximation approach to transient analysis for burst arrival scenarios. The exact probabilistic analysis of LTE RA is elaborated in~\cite{jian2017random}, and the bounds on the average and worst case performance are derived in~\cite{koseoglu2016lower} and~\cite{vilgelm2018icc}, respectively. Additionally, Extended Access Class Barring (EAB) for M2M is analyzed in detail in~\cite{cheng2015modeling}. Detailed performance model of the standardized ACB has been proposed in~\cite{8048578}.

\subsubsection{LTE RA improvements}
In~\cite{3gpp37.868}, 3GPP has proposed a number of potential solutions to the RACH overload problem. Among them are dynamic allocation of RA resources, M2M specific back-off scheme, and ACB. As for dynamic allocation, the authors in~\cite{vilgelm2017latmapa} have proposed an adaptive preamble allocation algorithm, maximizing the per-preamble throughput. It also includes a dynamic allocation of resources for RA, increasing the system's capacity to accept incoming connections whenever an overload in the RACH is detected. Similar to the seminal work on classical slotted ALOHA from Rivest~\cite{rivest1987network}, the authors in~\cite{jin2017recursive,dacb} have proposed a dynamic ACB algorithm, adaptively modifying the access probability according to the traffic load. Achieving similar performance, globally cooperative ACB has been designed in~\cite{6093905}.

The authors in~\cite{ko2012novel} have explored how different UEs choosing the same preamble can be distinguished based on the timing advance. Jang~\textit{et al.}~\cite{jang2014spatial} have introduced a novel spatial group based RA, expanding the available preamble space by exploiting the inverse relation between worst delay profile difference among UEs and the number of distinguishable preambles.
Similarly, Kim~\textit{et al.}~\cite{kim2015enhanced} have proposed a spatial group based RA with reusable preamble allocation, which effectively increases the preamble space if delay profile differences between spatial groups which are allocated the same preambles are assumed larger than multi-path delay spread. In~\cite{pratas2016random}, the employment of frames composed of a number of successive PRACH slots has been investigated. The authors in~\cite{wiriaatmadja2015hybrid} have proposed a hybrid protocol which combines RA and payload transmission, leveraging the fact that M2M devices are likely to have low payload volume. Random access with multi-user detection for multiple-antenna OFDMA has been developed in~\cite{7823034}.

Another class of algorithms revisited from the early years of slotted ALOHA research are the \textit{tree resolution algorithms (TRA)}~\cite{capetanakis1977multiple,tsybakov1978free}. The authors in~\cite{madueno2014efficient} have studied the use of a $q$-ary TRA in LTE RA, by reserving groups of $q$ preambles in a given PRACH for every collision. Their proposal was found out to resolve $30000$ synchronous arrivals needing only an average of 5 preamble transmissions per device. In~\cite{gursu2017hybrid}, two hybrid collision avoidance-tree algorithms have been proposed for LTE, combining TRA with pre-back-off and achieving a per-preamble throughput of up to 0.4295. 

\subsubsection{Binary Countdown for Contention Resolution}

The BCCR protocol is known primarily from the CAN bus systems, but it has been also employed in powerline communication~\cite{gehrsitz2014priority}, and studied academically for ad-hoc networks~\cite{you2003new,hashiura2010performance}. Recently, BCCR has been revisited as a possible option for access reservation in the next generation networks~\cite{baiocchi2017random,sen2011no}. In our previous work, we have independently proposed a simple framework for integrating BCCR into RACH~\cite{PIMRCpaper}. Here, we extend it by considering a joint optimization of preamble contention (via ACB) and MSG3 contention (via BCCR).
\vspace{-0.2cm}
\section{Random Access Procedure with Binary Countdown Contention Resolution}
\label{sec:recap}
As we see from the previous section, the majority of the RA improvements proposed in the state-of-the-art focus on the \textit{preamble contention} step. However, we note that the actual collision, although being a direct consequence of the preamble collision, is occurring at MSG3 transmission step. Following this observation, our approach aims at resolving a MSG3 contention while allowing the preamble collision. For that, we invoke the Binary Countdown Contention Resolution (BCCR) protocol prior to MSG3 contention~\cite{PIMRCpaper}. Addition of BCCR makes our approach largely orthogonal to the state-of-the-art, as the preamble contention could be optimized independently of the MSG3 contention. 

In this section, we explain the basics of BCCR protocol (\ref{sec:countdown}), its integration into 5G NR (\ref{sec:integration}), and give an illustrative example of RA operation with BCCR (\ref{sec:example}).

\subsection{Recap: Binary Countdown Protocol}
\label{sec:countdown}
\begin{figure*}[t!]
	\centering
	\includegraphics[width=0.75\textwidth]{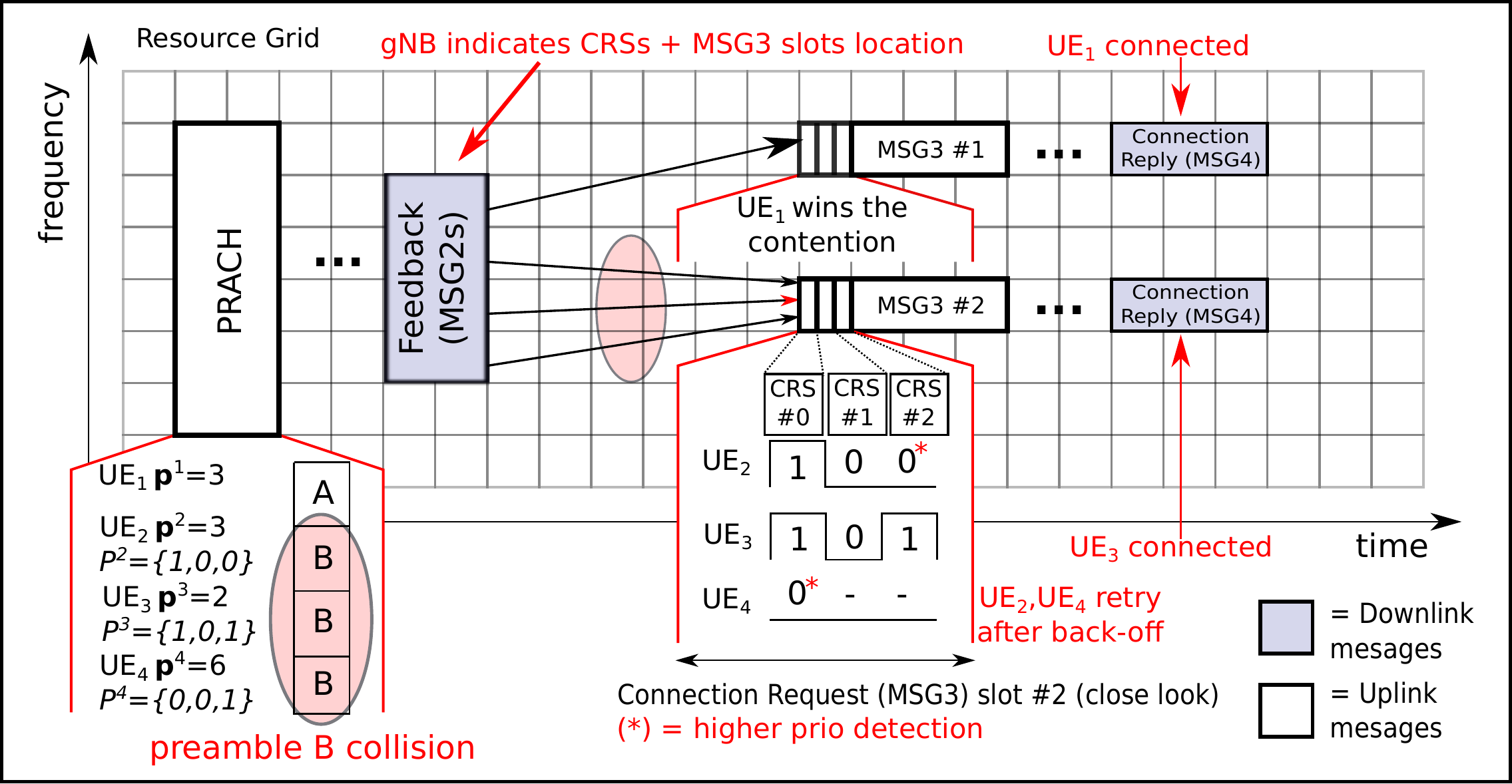}
	\caption{Exemplary operation of RA procedure with Binary Countdown Contention Resolution shown for two activated preambles A and B. For every activated preamble, additional resources are allocated for $k_i=3$ CRSs prior to MSG3 transmission. CRSs remain unused in the case of preamble A, whereas CRSs for preamble B the CRSs are used to resolve the contention.}
	\label{fig:procedure}
 	\vspace{-0.5cm}
\end{figure*}

The core idea of BCCR is to use short contention resolution slots (CRSs) prior to the packet transmission, to probabilistically ``decide'' in a distributed fashion which of the contending UEs transmits the packet.

To explain the protocol, we denote the number of CRSs in a given contention round $i$ as $k_i$, and the number of associated \textit{priority levels} $l_i\triangleq2^{k_i}$. Before the start of the BCCR procedure, each contending UE $x$ uniformly at random chooses a priority level $\mathbf{p}^{(x)}$. The selected level $\mathbf{p}^{(x)}$ is represented as a $k_i$-digit binary sequence $P^{(x)} = \left[ p_0^{(x)}, \ldots ,p_j^{(x)}, \ldots, p_{k_i-1}^{(x)} \right]$, corresponding to the base-2 representation of $\left(l_i - 1- \mathbf{p}^{(x)} \right)$, where $p_j^{(x)} \in \{ 0,1\}$ and $0\leq j \leq k_i-1$. As an example, if we set $k_i = 2$, we have the highest priority level $\mathbf{p}_{\max,i} = 0$ represented by $P_{\max,i} = \left[ 1, 1 \right]$, while the lowest priority $\mathbf{p}_{\min,i} = 3$ is represented by $P_{\min,i} = \left[ 0, 0 \right]$. Here, we follow the convention that \textit{$0$ is the highest priority}.

The binary sequence, generated from the chosen priority level, is then used by the UE to decide its behavior in any CRS $\#j$. Starting from the CRS $\#0$ onward, a contending UE $x$ is either listening to the medium if $p_j^{(x)} = 0$, or transmitting a signal to inform other contenders of its presence if $p_j^{(x)} = 1$. If, in any CRS, a silent UE detects another UE transmitting, it assumes there is a contending UE with higher priority and \textit{immediately} abandons the contention, i.e., it does not transmit in any later CRS regardless of its priority. If, on the contrary, a UE completes the $k_i$ CRSs without having detected any UE with higher priority, it assumes that it is the winner of the contention and proceeds to send its packet.

Priority levels can be assigned in a number of different ways, as further elaborated in~\cite{PIMRCpaper}. For an analogous problem, the authors in~\cite{baiocchi2017random} have already proven the uniform distribution to be the success-maximizing random priority distribution. For that reason, randomized priority selection with uniform distribution is assumed for the rest of this article.
\vspace{-0.2cm}
\subsection{BCCR Integration in 5G NR}
\label{sec:integration}
In contrast to bus or Wi--Fi systems, contention in LTE and NR starts with sending a random PRACH preamble, which makes the BCCR not applicable on the first step. However, as the actual collision occurs at the step three (MSG3), we proposed to allocate PUSCH resources for BCCR prior to MSG3, hence, extending the MSG3 slot by $k_i$ CRSs~\cite{PIMRCpaper}. Thus, the resulting RA procedure combines of two techniques: overload control prior to preamble transmission by the means of ACB, and contention resolution prior to MSG3 transmission using BCCR (see the time-frequency grid illustration in Fig.~\ref{fig:procedure}).

The duration of a CRS has to take into account the granularity of resource allocation, required resources for MSG3 duration, switching time between RX and TX. These factors are mostly limited by the technology standard. While for LTE the allocation granularity is conservative and limited to 1 sub-frame, in 5G NR smaller and more flexible CRS configurations are possible due to flexible frame structure and finer scheduling granularity down to 1 OFDM symbol~\cite{dahlman20185g}. To stay inline with NR scheduling, we assume a CRS to consume 1 Resource Block (RB) bandwidth x 1 symbol period per CRS basis, so that $T_{\text{CRS}} = 1$~OFDM symbol.

It is important to note that, to have the full effect, BCCR requires that all contending UEs are overhearing each other's broadcast signals. Since most of the M2M use cases target dense networks and correlated arrivals, \textit{small to medium geographical size events} can fully exploit BCCR, while for larger events there might be performance penalties due to hidden terminal problem. Additionally, short CRS duration implies that time alignment problems due to propagation delay are possible. We have investigated it in our previous work~\cite{PIMRCpaper}, showing that UEs should not be farther than roughly $1$~km apart, to have no time alignment problem. The exact size of fully supported events depends on the UE distribution and placement, network density, etc., and is outside the scope of the article.
\vspace{-0.2cm}
\subsection{Example: RA with BCCR}
\label{sec:example}
Fig.~\ref{fig:procedure} shows an example of BCCR operation with $k_i=3$ CRSs. After the preamble transmission is received, gNB allocates the resources for BCCR and MSG3 for every activated preamble and informs UEs about the allocated CRSs and their position in the time-frequency grid by MSG2 feedback. In the case of preamble A, it has only been activated by UE$_1$, so there is no collision to be avoided. Note, however, that UE$_1$ still needs to perform BCCR prior to sending MSG3, since the number of UEs occupying a certain preamble is unknown. Although unused CRSs introduce extra overhead, we will show in the later sections that this overhead is negligible compared to the gains of BCCR in high-load regime. Moreover, this overhead can be avoided if gNB can distinguish a collided from a singleton preamble during the first step of RA procedure~\cite{newCollision}. 

In contrast, UE$_2$, UE$_3$, and UE$_4$ have all activated the same preamble B. They then perform BCCR with randomly chosen priorities $P^{(2)} = \left[1,0,0 \right]$, $P^{(3)} = \left[1,0,1 \right]$ and $P^{(4)} = \left[0,0,1 \right]$. UE$_2$ and UE$_3$ transmit a signal in CRS $\#0$, which is sensed by the listening UE$_4$. Thus, UE$_4$ immediately abandons the contention and does not participate in any further CRSs, regardless of its priority. In CRS $\#1$, both UE$_2$ and UE$_3$ remain silent and listen to the medium, and both detect no transmission. Finally, in CRS $\#2$, UE$_2$ remains silent while UE$_3$ transmits a signal. Thus, UE$_2$ also abandons the contention, leaving UE$_3$ as sole winner, which then proceeds to sending MSG3 without collisions and to successfully connect with the gNB. In this case, BCCR has avoided what would otherwise have been a wasted RAO, turning it into a successful connection.
\section{Joint Binary Countdown - Access Barring Performance Analysis}
\label{sec:analysis}
In this section, we analyse the performance of the joint access barring and binary countdown operation. First, the system model is described in~\ref{sec:systemmodel}, then we derive an expected throughput in a single contention round in~\ref{sec:expanalysis}, extend it towards bi-objective optimization problem in~\ref{sec:paretooptimality}, and finally generalize for the full burst resolution delay in~\ref{sec:delayanalysis}.

\subsection{System Model}
\label{sec:systemmodel}

We consider a burst arrival scenario as proposed in~\cite{3gpp37.868}. $N$ UEs in a cell with one gNB are semi-synchronously activated. At time $t<0$, all UEs are disconnected from the gNB. During the interval $0\leq t < T_a$, every UE commences the connection procedure at a random time $t$ with probability distribution $g_a(t)$. The probability distribution is representing an arrival process, with three main possibilities: beta-distributed, uniformly random, and simultaneous ``spike'' arrivals with $T_a=0$. 
\begin{definition}[PRACH slot]
We denote the periodicity of PRACH in the resource grid as a~\textit{PRACH slot}, or~\textit{slot}.
\end{definition}

It has been shown that, in the current networks, the collision feedback as well as broadcast periodicity might exceed PRACH slot duration~\cite{8048578, wei2015modeling}. Although the signaling procedures are more flexible in NR standard, the definition of a \textit{contention round} is necessary in order to generalize the analysis accounting for different possible RACH implementations.

\begin{definition}[Contention round]
We denote the minimum period within which the contention parameters can be adjusted and the collision feedback can be received as a contention round. A single contention round can comprise one or multiple PRACH slots.
\end{definition}

Prior to any contention round $i$, every UE undergoes an ACB check: With the access probability $p_i$ it proceeds to contend, and with probability $1-p_i$ it skips the upcoming round\footnote{In other words, ACB represents a geometric random back-off. It is one possible back-off option, and it can serve as an approximation for other back-off schemes or combinations thereof.}. If the ACB check is passed, UE chooses (uniformly random) a $j^\text{th}$ preamble, with $j \in \{1, ..., M\}$, where $M$ is the total number of preambles available in a given contention round. Prior to BCCR, each preamble can have one of three possible outcomes: \textbf{idle} if no device occupies the preamble; \textbf{successful} if one and only one device chooses the preamble; and \textbf{collided} otherwise.

For any collided preamble $j$, at most one UE among those having chosen it can be successfully resolved via BCCR. For every available preamble, we have one RAO associated to it. Random variable representing the number of UEs choosing a given preamble $j$ as $m_{i,j}$, and outcome of a RAO as $x_{i,j}$, we define the resulting collision channel model as:
\begin{equation}
x_{i,j} \triangleq \begin{cases}
1 & m_{i,j}=1\text{ }\cup\text{ }\left(m_{i,j}>1\text{ }\cap\text{ }\text{resolved via BCCR}\right),\\
0 &\text{otherwise.}
\end{cases}
\label{eqn:channelmodel}
\end{equation}

The contention in a collided preamble ($m_{i,j}>1$) is defined as \textit{resolved} via BCCR, if one of the UEs has uniquely chosen the highest priority among the set of the priorities selected by UEs occupying preamble $j$.

As the focus of this article is on the interplay between BCCR and ACB, we make an additional assumption that the downlink channel resources are sufficient and do not pose a performance bottleneck.

\subsection{Single Contention Round: Expected Throughput}
\label{sec:expanalysis}
To analyze RACH under joint effect of ACB and BCCR, consider the system state prior to a contention round $i$. We denote the number of competing UEs at this point as $n_i$. Since we assume $p_i$-persistent ACB with no drops, $n_i$ accounts both for backlogged users and newly arrived ones, as there is no distinction between their behavior.

\begin{theorem}
\label{theo:expsuccess} Given $n_i$ competing UEs, access probability $p_i$, and $l_i$ BCCR priority levels, the expected number of successful RAOs $S\triangleq\E\left[s_i\right]$ \emph{(throughput)} in the contention round $i$ is:
\begin{equation} \label{eq:exp_S}
S = \frac{n_ip_i}{l_i} \sum_{h=1}^{l_i} \left(1 - \frac{h}{l_i} \frac{p_i}{M} \right)^{n_i-1}.
\end{equation}
\end{theorem}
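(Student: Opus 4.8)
The plan is to compute $S$ by linearity of expectation applied to the UEs rather than to the preambles. Because the collision-channel model \eqref{eqn:channelmodel} lets at most one UE be resolved on any given preamble, the number of successful RAOs $s_i$ equals the number of UEs that win their contention; hence $S=\E[s_i]=\sum_{x=1}^{n_i}\Prob(\text{UE }x\text{ wins})$. The UEs are exchangeable — each independently passes the ACB check with probability $p_i$, then picks a preamble uniformly in $\{1,\dots,M\}$, then picks a priority level uniformly in $\{0,\dots,l_i-1\}$ — so this reduces to $S=n_i\,\Prob(\text{UE }1\text{ wins})$.

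Next I would condition on UE~$1$'s own draws. It must first pass ACB (probability $p_i$); given that, it selects some preamble $j$ and some priority level $\mathbf p^{(1)}$, uniform on $\{0,\dots,l_i-1\}$. The key structural observation I would state explicitly is that, by the definition of resolution via BCCR, ``UE~1 wins'' is \emph{exactly} ``UE~1 uniquely holds the highest priority among the UEs on preamble $j$'', i.e.\ every other contending UE that lands on $j$ chose a \emph{strictly lower} priority; this single characterization covers both the singleton case $m_{i,j}=1$ (vacuously true) and the collided-but-resolved case of \eqref{eqn:channelmodel}. Writing $\mathbf p^{(1)}=h-1$ with $h\in\{1,\dots,l_i\}$, UE~1 then wins iff none of the other $n_i-1$ UEs ``blocks'' it, where blocking means: passing ACB, picking preamble $j$, and picking a priority level numerically $\le h-1$.

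Then I would evaluate the per-competitor non-blocking probability by splitting into three disjoint events: the competitor fails ACB (probability $1-p_i$); it passes ACB but picks a different preamble (probability $p_i(1-1/M)$); it passes ACB, picks preamble $j$, but draws one of the $l_i-h$ strictly lower levels in $\{h,\dots,l_i-1\}$ (probability $p_i\frac1M\frac{l_i-h}{l_i}$). These sum to $1-\frac{h}{l_i}\frac{p_i}{M}$. Since, conditioned on UE~1's realization, the other UEs' ACB/preamble/priority draws are mutually independent (and independent of UE~1's draw), the conditional win probability is $\bigl(1-\frac{h}{l_i}\frac{p_i}{M}\bigr)^{n_i-1}$. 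Averaging over the uniform choice of $\mathbf p^{(1)}$, equivalently over $h\in\{1,\dots,l_i\}$ each with weight $1/l_i$, and multiplying by the ACB factor $p_i$ and the population factor $n_i$ gives precisely \eqref{eq:exp_S}.

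The main obstacle is not the algebra but the bookkeeping around the priority convention: one must keep track that ``$0$ is the highest priority'', that a tie at the top also forfeits the win (so the blocking set is $\{0,\dots,h-1\}$, of size $h$, not $h-1$), and that the re-indexing $h=\mathbf p^{(1)}+1$ is what turns the count $l_i-1-\mathbf p^{(1)}$ of admissible lower levels into $l_i-h$ and produces the clean summand $\bigl(1-\frac{h}{l_i}\frac{p_i}{M}\bigr)^{n_i-1}$. A secondary point worth a sentence is the explicit justification of the independence invoked in the third step, which is what makes the product form legitimate.
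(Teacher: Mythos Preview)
Your proposal is correct, and the core probabilistic step---computing the conditional win probability $\bigl(1-\tfrac{h}{l_i}\tfrac{p_i}{M}\bigr)^{n_i-1}$ for a tagged UE with priority $h-1$, then averaging over $h$---is exactly what the paper does. The only cosmetic difference is the outer decomposition: the paper applies linearity of expectation over preambles (writing $S=M\,\Prob[x_{i,j}=1]$ and then, inside, accounting for which of the $n_i$ UEs is the winner via the factor $\binom{n_i}{1}\tfrac{p_i}{l_iM}$), whereas you apply linearity over UEs directly ($S=n_i\,\Prob[\text{UE }1\text{ wins}]$); since at most one UE can win per preamble, both decompositions count the same indicators and collapse to the identical expression, with your route saving the detour through the per-preamble probability.
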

\begin{proof}
	By the definition~\eqref{eqn:channelmodel}, the outcome of an arbitrary RAO $j$ is successful, $x_{i,j}=1$, if a unique UE chooses the RAO, or if it it wins a contention by choosing the highest priority level. For an arbitrary priority level $\mathbf{p}^\prime = h-1$ with $h \in \{1,2,\ldots, l\} $, there are $h$ levels with equal or higher priority. E.g., $ \mathbf{p}^\prime \triangleq 0 \Rightarrow h = 1 $ (1 higher or equally prioritized level), or $ \mathbf{p}^\prime = l_i - 1 \Rightarrow h = l_i $. Consider a UE that has passed the ACB check, has chosen preamble $j$ and has chosen a priority $h-1$, we obtain its successful BCCR probability:
	\begin{eqnarray}
	\Prob\left[x_{i,j}=1 | \mathbf{p}^\prime, \text{ACB passed}\right] = \left(1 - \frac{h}{l_i} \frac{p_i}{M} \right)^{n_i-1},
	\end{eqnarray}
	where $\frac{h}{l_i} \frac{p_i}{M}$ represents the probability that another UE passes ACB, chooses preamble $j$ and higher or equal priority level. Since the events of choosing any priority levels are a partition of the sample space, we conclude:
	\begin{align} 
	&\Prob\left[x_{i,j}=1|\text{ACB passed}\right]  =  \nonumber\\
	&=\sum_{h=1}^{l_i} \Prob\left[\mathbf{p}^\prime=h-1\right]\Prob\left[x_{i,j}=1|\mathbf{p}^\prime = h-1, \text{ACB passed}\right] \nonumber\\
	&=\sum_{h=1}^{l_i} \frac{1}{l_i} \left(1 - \frac{h}{l_i} \frac{p_i}{M} \right)^{n_i-1},\label{eqn:theo1temp}
	\end{align}
	where the summation over $h\in\{1,...,l_i\}$ considers any possible priority level.
	By analogy, accounting for the probability of a UE to pass ACB check $p_i$, choose the preamble $j$, and that any of $n_i$ could be successful, we obtain a modified expression~\eqref{eqn:theo1temp}:
	\begin{equation} \label{eq:exp_X1}
	\Prob[x_{i,j} = 1] = 
	\binom{n_i}{1}\frac{p_i} {l_iM}\sum_{h=1}^{l_i}\left( 1 - \frac{h}{l_i} \frac{p_i}{M} \right)^{n_i-1}.
	\end{equation}
	By definition of expectation, we get:
	\begin{equation}
	\E[x_{i,j}] = \sum_{t=0}^{1}t\Prob[x_{i,j}]=\Prob[x_{i,j}=1].
	\label{eqn:exp_succ_preamble}
	\end{equation}
	
	To obtain \eqref{eq:exp_S}, we recall that $s_i=\sum_{j=1}^{M}x_{i,j}$, and use the sum of the expectations rule:
	\begin{equation} \label{eq:exp_sum}
	S = M\E\left[  x_{i,j} \right] = \frac{n_ip_i}{l_i} \sum_{h=1}^{l_i} \left(1 - \frac{h}{l_i} \frac{p_i}{M} \right)^{n_i-1}.
	\end{equation}
\end{proof}

The implications of the theorem are illustrated in Fig.~\ref{fig:ana_vs_sim}, where the expected throughput is plotted as a function of $p_i$ for different values of  $l_i$, for a fixed $M=54$~preambles and $n_i=1000$~UEs. Clearly, increasing $l_i$ improves the throughput, and increases the supported load, by shifting the peak of the curve to the right. We also observe that the analytical results are closely matching the simulation.

\begin{figure}[t!]
	\centering
	\includegraphics[width=\linewidth]{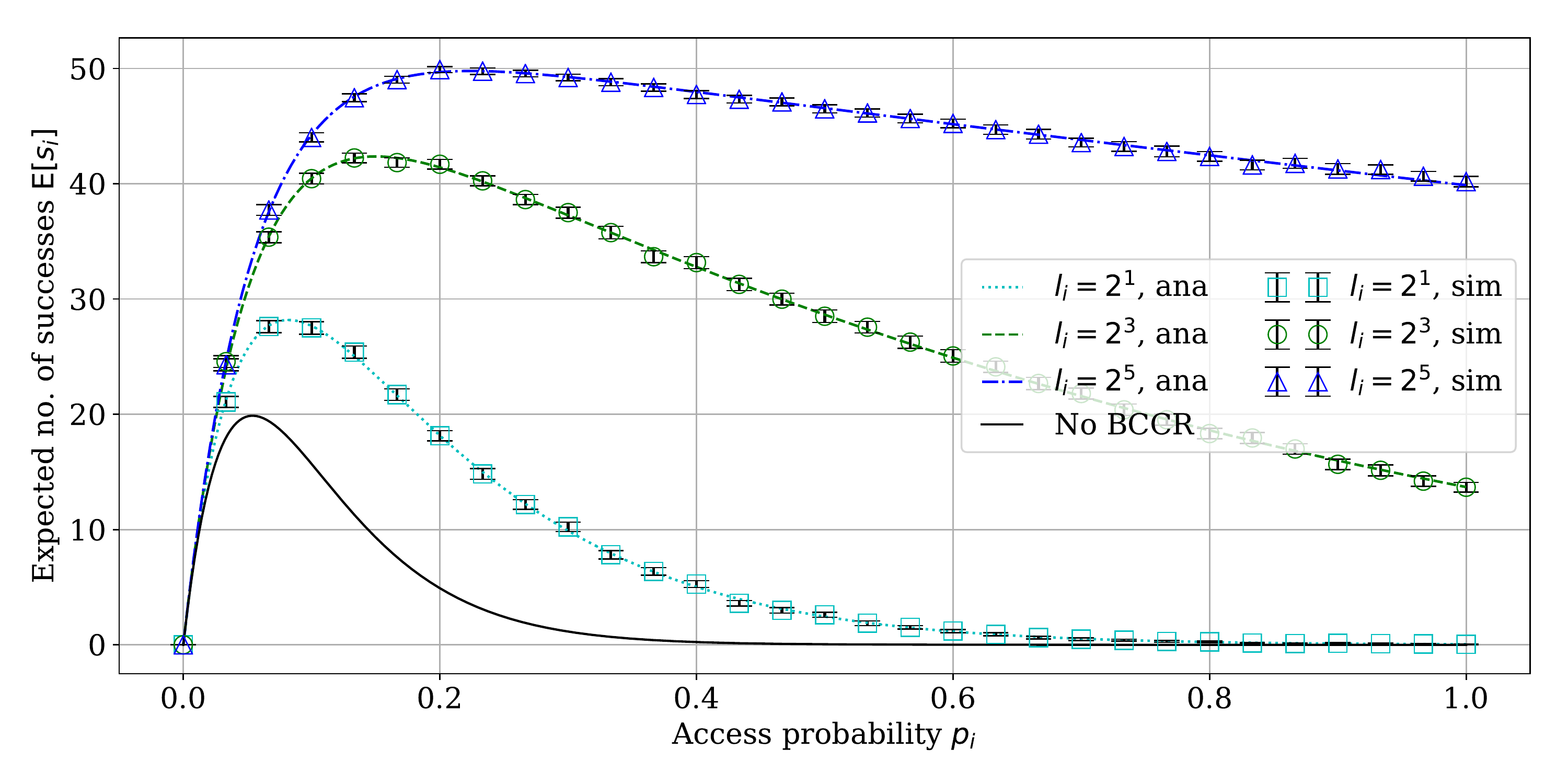}
	\caption{Expected throughput $S\triangleq\E\left[s_i\right]$ vs. access probability for different number of priority levels in a single contention round, $M=54$, $n_i=1000$ UEs, with $0.95$ confidence intervals.}
	\label{fig:ana_vs_sim}
\end{figure}

\subsection{Single Contention Round: Bi-objective Optimization}
\label{sec:paretooptimality}
When applying ACB in the RACH, the \textit{access probability} $p_i$ must be chosen and broadcast by the gNB prior to every contention round $i$. Adding BCCR introduces a new design parameter into the problem, namely the number of CRSs $k_i$. Its value must also be chosen by the gNB and communicated to the UEs along with the $p_i$, so that it is known by all the participants prior to MSG3 transmissions. For a given $n_i$, we define a pair of values $(p_i,k_i)$ as an \textit{operating point}. 

In the state-of-the-art, RACH problem is typically approached as a maximization of the expected throughput $S$. With access probability $p_i$ being the only design parameter (no BCCR), there is a single optimal point~\cite{jin2017recursive,dacb} 
\begin{equation}
p_i^\star=\min\left(1,\frac{M}{n_i}\right).\label{eqn:pstar}
\end{equation} 

However, this approach is not applicable to our modified procedure. It is clear that, for $n_i > 1$, increasing $k_i$ always has a positive effect on throughput, and it is intuitively clear that BCCR can achieve an arbitrary small collision probability. On the other hand, adding CRSs introduces an overhead as every CRS consumes additional time-frequency resources. Hence, we face a fundamental trade-off between two competing optimization goals: maximizing the expected throughput and minimizing the expected resource consumption.

To assess this trade-off, we need to quantify the number of resources consumed per contention round. For the sake of simplicity, we focus only on the consumed resources in the uplink channels. First, to characterize the $i^\text{th}$ contention round's outcome, we introduce the auxiliary variables denoting number of idle and occupied (successful or collided) preambles, respectively: $M_i^I\triangleq\sum_{j=1}^{M}\mathbbm{1}_{m_{i,j}=0}$, $M_i^O\triangleq\sum_{j=1}^{M}\mathbbm{1}_{m_{i,j}\geq1}$, where $\mathbbm{1}_X$ is the indicator function of a subset defined by condition $X$. Recall that $m_{i,j}$ denotes the number of UEs occupying a preamble $j$.
\begin{lemma}
\label{lem:expnonidle}
Given $n_i$ backlogged UEs, and access probability $p_i$, the expected number of occupied preambles $\E\left[M_i^O\right]$ in the $i^\text{th}$ contention round is:
\begin{equation} \label{eq:exp_O}
\E\left[M_i^O\right] =  M - M \left( 1 - \frac{p_i}{M} \right)^{n_i}.
\end{equation}
\end{lemma}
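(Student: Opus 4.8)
The plan is to compute $\E[M_i^O]$ by linearity of expectation over the $M$ preambles, exactly mirroring the structure of the proof of Theorem~\ref{theo:expsuccess}. Since $M_i^O=\sum_{j=1}^M \mathbbm{1}_{m_{i,j}\geq 1}$, we have $\E[M_i^O]=\sum_{j=1}^M \Prob[m_{i,j}\geq 1] = M\,\Prob[m_{i,j}\geq 1]$ by symmetry across preambles. So the whole task reduces to computing, for a fixed preamble $j$, the probability that at least one UE occupies it.

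The key step is then to compute $\Prob[m_{i,j}=0]$, i.e. the probability that preamble $j$ is idle, and use $\Prob[m_{i,j}\geq 1]=1-\Prob[m_{i,j}=0]$. For a single UE, the probability it ``hits'' preamble $j$ is the product of passing the ACB check and then selecting preamble $j$ uniformly among the $M$ preambles, which is $p_i\cdot\frac{1}{M}=\frac{p_i}{M}$; hence the probability this particular UE does \emph{not} occupy preamble $j$ is $1-\frac{p_i}{M}$. Because the $n_i$ UEs act independently (independent ACB draws and independent uniform preamble choices), $\Prob[m_{i,j}=0]=\left(1-\frac{p_i}{M}\right)^{n_i}$. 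Substituting, $\Prob[m_{i,j}\geq1]=1-\left(1-\frac{p_i}{M}\right)^{n_i}$, and multiplying by $M$ gives $\E[M_i^O]=M-M\left(1-\frac{p_i}{M}\right)^{n_i}$, which is exactly \eqref{eq:exp_O}.

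I do not anticipate a serious obstacle here; this lemma is essentially the standard balls-in-bins / slotted-ALOHA occupancy computation, thinned by the ACB probability $p_i$. The only point requiring a line of care is the justification that $m_{i,j}$ across different $j$ need not be independent, but that linearity of expectation does not require independence — only that each indicator has the same marginal probability, which follows from the symmetry of the uniform preamble selection. One could alternatively note that the number of UEs that pass ACB and pick preamble $j$ is Binomially distributed with parameters $n_i$ and $\frac{p_i}{M}$, so $\Prob[m_{i,j}=0]=\binom{n_i}{0}\left(\frac{p_i}{M}\right)^0\left(1-\frac{p_i}{M}\right)^{n_i}=\left(1-\frac{p_i}{M}\right)^{n_i}$, which makes the same computation fully explicit and is consistent with the binomial framing already used in \eqref{eq:exp_X1}.
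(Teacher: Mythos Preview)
Your proof is correct and follows essentially the same approach as the paper: define the per-preamble occupation indicator, compute $\Prob[m_{i,j}=0]=\left(1-\frac{p_i}{M}\right)^{n_i}$, and sum the expectations over the $M$ preambles using linearity. Your write-up is in fact slightly more detailed than the paper's, explicitly noting that linearity of expectation does not require independence across preambles.
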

\begin{proof}
	Following an analogous approach as for Theorem~\ref{theo:expsuccess}, we consider a single preamble $j$ first. Denote by $y_{i,j}\triangleq\mathbbm{1}_{m_{i,j}\geq1}$ the binary random variable indicating occupation of the preamble $j$ in the round $i$. The probability that a given preamble is idle is obtained then as
$
	\Prob[y_{i,j}=0] = \left(1 - \frac{p_i}{M} \right)^{n_i}.
$
	Proceeding analogously as in Eqn.~\eqref{eq:exp_sum}, we obtain~\eqref{eq:exp_O} as:
	\begin{equation} \label{eq:exp_Y1}
	\E[M_i^O]=M\sum_{j}\E\left[y_{i,j}\right] 
	= M - M\left(1 - \frac{p_i}{M} \right)^{n_i}.
	\end{equation}
\end{proof}

Clearly, the number of occupied preambles is increasing with increasing access probability $p_i$. According to the procedure, for every occupied (activated) preamble, resources for MSG3 + $k_i$ CRSs transmissions are allocated.

\begin{proposition}
	\label{prop:resourceconsumption}
	For a given contention round $i$, the expected uplink resource consumption, as a function of the number of contending UEs $n_i$, is:
\begin{equation}
R = R_1 + r_3\left(1 + k_i\delta \right)\underbrace{\left(M - M\left(1-\frac{p_i}{M}\right)^{n_i}\right)}_{\text{expected occupied preambles}}\emph{~RBs}.\label{eqn:res_consumption}
\end{equation}
where $R_1$ are the resources consumed by PRACH, $r_3$ the resources consumed by every MSG3 transmission and $\delta\triangleq \frac{r_{\text{CRS}}}{r_3}$ is the relative overhead introduced by each CRS with respect to $r_3$, i.e., a CRS consumes $\delta r_3$ resources.
\end{proposition}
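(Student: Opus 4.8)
The plan is to account for the uplink resources in a single contention round as a sum of two contributions: a fixed PRACH cost and a variable cost proportional to the number of activated preambles. First I would observe that the PRACH preamble transmissions (MSG1), regardless of how many UEs pass the ACB check, always occupy a fixed block of resources in the resource grid, which the statement denotes $R_1$; this term is therefore deterministic and contributes directly to the expected value. The only randomness in the resource budget comes from the uplink slots allocated \emph{after} MSG1, namely MSG2-scheduled PUSCH resources for the $k_i$ CRSs plus the MSG3 transmission, and these are allocated once per activated (occupied) preamble.

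Next I would quantify the per-occupied-preamble cost. By the procedure described in Sec.~\ref{sec:integration}, for each occupied preamble the gNB schedules $r_3$ RBs for MSG3 and $k_i$ CRSs, each consuming $\delta r_3$ RBs by the definition $\delta \triangleq r_{\text{CRS}}/r_3$. Hence each occupied preamble costs $r_3 + k_i \delta r_3 = r_3(1 + k_i\delta)$ RBs. Writing the random total as $R_i = R_1 + r_3(1+k_i\delta) M_i^O$, where $M_i^O = \sum_{j=1}^M \mathbbm{1}_{m_{i,j}\geq 1}$ is the random number of occupied preambles defined just before Lemma~\ref{lem:expnonidle}, I would then take expectations. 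Since $R_1$, $r_3$, $k_i$ and $\delta$ are constants for a fixed contention round, linearity of expectation gives $R \triangleq \E[R_i] = R_1 + r_3(1+k_i\delta)\,\E[M_i^O]$.

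Finally I would substitute the closed form for $\E[M_i^O]$ supplied by Lemma~\ref{lem:expnonidle}, namely $\E[M_i^O] = M - M(1 - p_i/M)^{n_i}$, which is exactly the under-braced ``expected occupied preambles'' term in the claimed identity~\eqref{eqn:res_consumption}. This yields
\begin{equation*}
R = R_1 + r_3(1+k_i\delta)\left(M - M\left(1-\frac{p_i}{M}\right)^{n_i}\right)\text{ RBs},
\end{equation*}
which is the statement to be proved.

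I do not anticipate a genuine mathematical obstacle here, since the result follows from linearity of expectation and a direct application of Lemma~\ref{lem:expnonidle}; the main point requiring care is the bookkeeping/modeling step, i.e., justifying that the only stochastic resource contribution is the per-occupied-preamble MSG3+CRS allocation (so that the assumption of ample downlink resources from Sec.~\ref{sec:systemmodel} lets us ignore MSG2/MSG4 overhead) and that every occupied preamble, whether ultimately successful or collided, is charged the full $r_3(1+k_i\delta)$. Once this accounting is stated cleanly, the expectation computation is a one-line consequence of the earlier lemma.
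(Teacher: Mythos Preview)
Your proposal is correct and follows exactly the paper's approach: the paper's proof is simply ``Follows directly from Lemma~\ref{lem:expnonidle},'' and you have merely spelled out the bookkeeping (fixed PRACH cost $R_1$ plus a per-occupied-preamble cost $r_3(1+k_i\delta)$) before invoking that lemma via linearity of expectation. There is nothing to add or correct.
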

\begin{proof}
Follows directly from Lemma~\ref{lem:expnonidle}.
\end{proof}
It is clear that $R_1$ is deterministic and only depends on the number of allocated preambles and their format. The second summand is stochastic, and depends on the preamble occupation.

We now have all the necessary elements to formulate the trade-off between resource consumption and throughput as a bi-objective optimization problem:
\begin{subequations}
\label{eq:opt_problem}
\begin{eqnarray}
&&\underset{p_i,k_i}{\min} \quad\left\lbrace -S, R\right\rbrace\\
\text{with }&&S =  \frac{n_ip_i}{l_i}\sum_{h=1}^{l_i} \left( 1 - \frac{h}{l_i}\frac{p_i}{M} \right)^{n_i-1}\nonumber\\
			&&R =R_1 + r_3(1 + k_i \delta) \left(M - M\left(1-\frac{p_i}{M}\right)^{n_i}\right)\nonumber\\
\text{s.t.} &&p_i\in(0,1],\\
&& k_i \in \mathbb{Z}_{\geq 0}.
\end{eqnarray}
\end{subequations}

As it is a multi-objective optimization problem, we study Pareto optimal points, that is, solutions for which there is no other possible solution which simultaneously performs better with respect to one of the optimization goals without degrading the other. These points constitute the Pareto frontier. Introducing BCCR into the RA procedure dramatically modifies how the structure of Pareto frontier looks like, but it does not modify the problem's dual nature of conflicting objectives.
\begin{figure}[t!]
	\centering
	\begin{subfigure}[t]{\linewidth}
		\includegraphics[width=\linewidth]{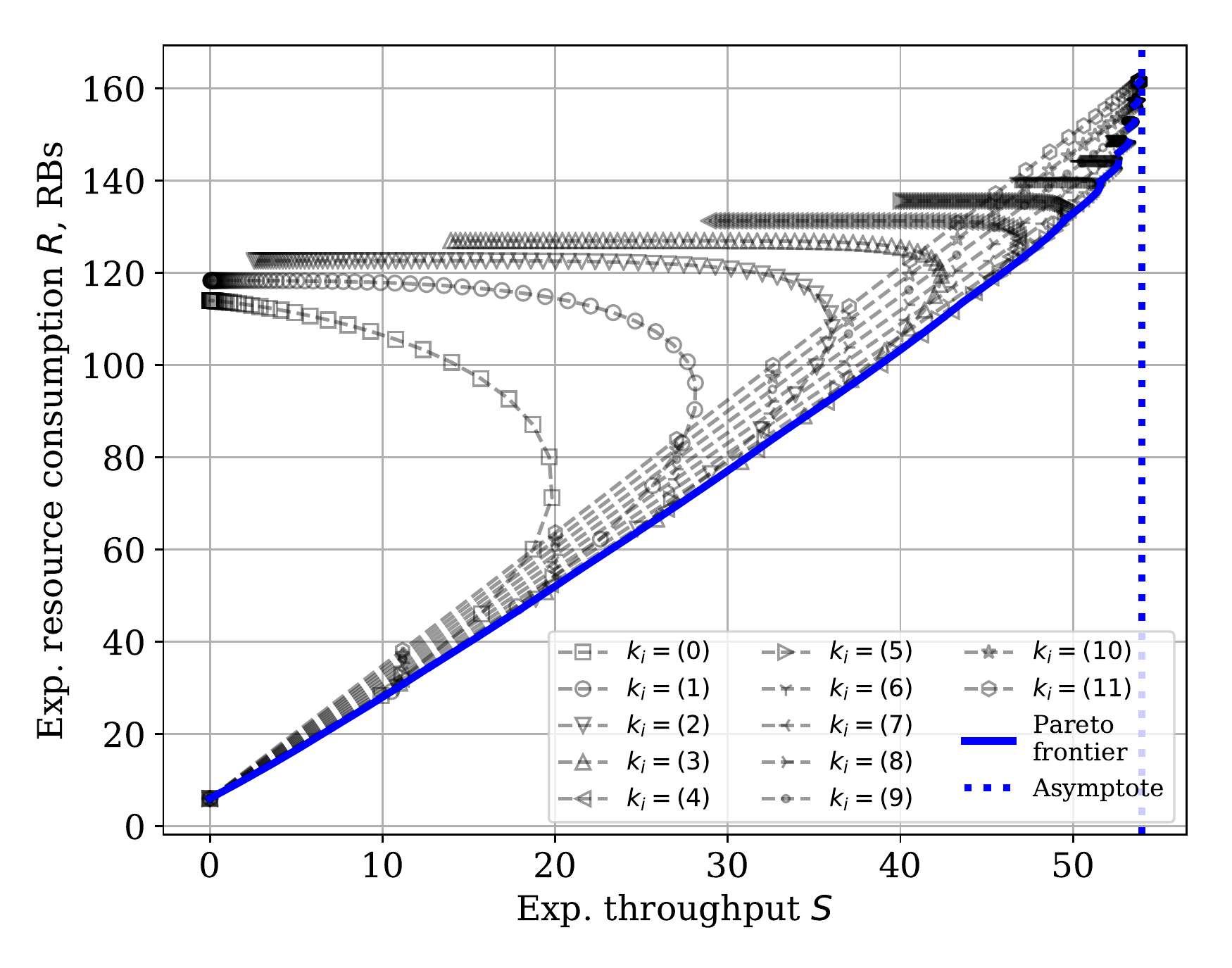}
	\vspace{-0.7cm}
	\end{subfigure}
	\caption{Expected consumed resources vs. expected number of successes, showing the Pareto frontier. $R_1=6$~RBs, $r_3=2$~RBs, $\delta=0.07$, $M=54$ preambles, $n_i=1000$ UEs.}
	\label{fig:paretofrontier}
\end{figure}

Exemplary Pareto frontier produced numerically for the optimization problem defined by~\eqref{eq:opt_problem} is plotted in Fig.~\ref{fig:paretofrontier} for $n_i = 1000$, assuming that MSG3 takes $r_3=2$ RBs (1 sub-frame in time domain), and a CRS occupies 1 OFDM symbol, i.e., $\delta=\frac{1\text{ symbol}}{1\text{ sub-frame}} \approx 0.07$. Every black (with markers) curve corresponds to achievable performance for a fixed value of $k_i$, and varying the values of $p_i$. We observe that the Pareto frontier is a combination of achievable performance curves for different values of $k_i$. The ratio of $S/R$ is almost constant through a large part of the Pareto frontier, and starts decreasing as throughput approaches the total number of preambles $M$. This means that for large values of $k_i$, increasing it further results only in marginal improvement of throughput.

We further observe that the Pareto frontier exhibits an asymptotic behavior at:
\begin{equation}
\label{eq:pareto-asymptote}
\lim\limits_{k_i\to+\infty}S = \E[M_i^O|p_i=1]=M-M \left( 1-  1/M\right)^{n_i}.
\end{equation}

This follows since the expected throughput is constrained by the maximum expected number of occupied preambles, while the expected resource consumption, in our simplified model, is not constrained at all.
It readily follows that:
\begin{equation}
\lim\limits_{n_i\to+\infty}\E[M_i^O|p_i=1]=M.
\end{equation}
Hence, we can asymptotically achieve normalized per-preamble throughput of $1$. This result coincides with analytical studies of other binary countdown-based protocols, showing that arbitrary small collision probability could be achieved~\cite{baiocchi2017random}.

We will return to the bi-objective optimization and design a practical burst resolution algorithm obtaining a Pareto-optimal solution in the next Sec.~\ref{sec:proposal}.

\subsection{Full Burst: Expected Resolution Time}
\label{sec:delayanalysis}

To generalize the single contention round analysis towards the full burst resolution time $T_{\text{BR}}$, we apply a modified \textit{drift approximation model} proposed by Wei~\textit{et al.}~\cite{wei2015modeling}. Since describing the exact evolution of the backlog over time is a tedious task, the authors~\cite{wei2015modeling} propose to approximate it by considering only the evolution of the expectation of the backlog. Let us introduce additional notation of new arrivals during the contention round $i$ as $a_{i}$. The backlog state at any round is thus represented by the following random variable recursion:
\begin{equation}
n_{i+1} = n_i - \underbrace{s_i}_{\text{successful UEs}} + \underbrace{a_{i}.}_{\text{new arrivals}}
\end{equation}
Now, to compute the expected burst resolution time, we can consider this recursion in the expectation:
\begin{equation}
\E[n_{i+1}] = \E[n_i] - \E\left[s_i\right] + \E[a_{i}].
\end{equation}
Expected throughput $\E[s_i]=S$ in a given round is computed via Theorem~\ref{theo:expsuccess}, and the expected arrivals in a contention round, dependent on the arrival process, are computed via the probability density function of the activation time $g_a(t)$ as:
\begin{equation}
\E[a_{i}] = N\int_{(i-1)T_{\text{PRACH}}}^{iT_{\text{PRACH}}}g_a(t)dt.
\end{equation}
Now, computing the expected burst resolution time $\E\left[T_{\text{BR}}\right]$ with an arbitrary precision $\epsilon$ simplifies to an iterative process with a stopping condition:
\begin{equation}
\E\left[T_{\text{BR}}\right] = t,\quad\text{if }\E[n_{t}]<\epsilon\text{ and } \E[a_i]=0\text{ }\forall\text{ }i\geq t.
\end{equation}
\section{Implementation: Dynamic Binary Countdown - Access Barring (DBCA)}
\label{sec:proposal}

In this section, applying the analytical results and observations from the previous sections for the practical design of RA procedure, we propose a Dynamic Binary Countdown - Access barring (DBCA) protocol. In the core of the protocol is the idea to dynamically determine at every contention round $i$ the values of $p_i$ and $k_i$ from the Pareto frontier. To make the protocol more practical, we also aid it with a backlog estimator, since backlog is unknown to the gNB in most of the scenarios.

DBCA protocol consists of the following main steps, repeated in every contention round:
\begin{enumerate}
\item[\textbf{I.}] Contenting UEs undergo ACB and (if successful) transmit MSG1s
\item[\textbf{II.}] gNB receives MSG1s and updates the estimates of the number of contending UEs $\hat{n}_{i}$.
\item[\textbf{III.}] Based on the estimate $\hat{n}_i$, gNB calculates the number of CRSs $k_i$ to be used for MSG3 transmissions and informs contending UEs about it as part of the MSG2.
\item[\textbf{IV.}] UEs undergo BCCR and (if successful) transmit MSG3s
\item[\textbf{V.}] gNB receives MSG3s and updates the estimate of the number of backlogged UEs $\hat{n}^-_{i+1}$ for the next round.
\item[\textbf{VI.}] Based on the estimate $\hat{n}^-_{i+1}$, gNB calculates barring factor $p_{i+1}$ for the next round, and informs UEs via system information broadcast.
\end{enumerate}

The pseudocode for gNB-side of DBCA is presented in Algorithm~\ref{alg:proposed_algorithm}. In the following, we explain in the choice of the operating point $(p_i,k_i)$ given the backlog estimate in steps III and VI, the estimation procedure in steps II and V.

\subsection{Choosing the operating point $(p_i,k_i)$}
The core of the DBCA algorithm, corresponding to steps III and VI of the algorithm, is choosing the operating point on the Pareto curve. To obtain a Pareto-optimal solution, the respective bi-objective optimization problem is solved by scalarization, where we convert both objectives into one using the preferences of a decision maker. We apply scalarization by $\epsilon$-constraint method~\cite{miettinen2008introduction}: A constraint is set on one objective function, and the system of optimized for the second objective. Either the minimum desired throughput or the maximum allowed resource consumption can be used as a constraint. Typically, however, the resource constraint $R \leq \epsilon_R$ would be a major limiting factor. In this case, the optimization problem targets the maximization of the expected throughput subject to the constraint on the expected resource consumption. We formulate it as follows:
\begin{subequations}
\label{eq:p_solver}
\begin{eqnarray}
\max_{p_{i},k_{i}}&&\text{ }S\left(k_{i},p_{i};n_{i} \right) \\
\text{s.t.} &&R \left(k_{i},p_{i};n_{i} \right) \leq \epsilon_R\\
&&k_{i} \in \mathbb{Z}_{+},\:p_{i}\in (0,1]
\end{eqnarray}
\end{subequations}
\begin{remark}[On the Pareto optimality]
	$\epsilon$-constrained method ensures at least weak Pareto optimality~\cite{miettinen2008introduction}. If multiple optimal solutions to the problem~\eqref{eq:p_solver} are found, strong Pareto optimality with respect to~\eqref{eq:opt_problem} can be enforced by choosing the solution with the lowest resource consumption. 
\end{remark}
The operating point choice is split into two stages, since $p_i$ and $k_i$ must be allocated at different times: $k_i$ prior to MSG3, and $p_i$ prior to MSG1. First, consider the $k_i$ allocation at stage II. To maximize the expected number of successes, gNB observes the outcome of the preamble transmission (number of activated preambles $r_i$), and decides $k_i$ according to the estimated $\hat{n}_i$, and subject to the expected resource consumption constraint $\epsilon_R$. 
Setting $R=\epsilon_R$, and solving Eqn.~\eqref{eqn:res_consumption} for $k_i$ and rounding to the nearest integer, we obtain the decision rule for the number of CRSs:
\begin{equation}
\label{eq:k_solver}
k_i = \left \lceil\frac{1}{\delta}\left(\frac{\epsilon_R - R_1} {r_3 \left(M - M\left(1-p_i/M\right)^{\hat{n}_i}\right)} - 1 \right)\right\rfloor.
\end{equation}
Then, $k_i$ is communicated to the UEs as a part of the MSG2 alongside with the uplink grants for the MSG3 transmission. 
\begin{remark} Note that we are considering \textit{soft constraints}, which apply only to the expectations. This constraints are useful from the system design and dimensioning perspective, allowing the gNB to manage the resource distribution between different functions or slices. Hard constraint can be enforced by substituting the term $M\left(1-p_i/M\right)^{\hat{n}_i}$ in Eqn.~\eqref{eq:k_solver}, representing the expected number of idle preambles, with the observed value $M_i^I$.
\end{remark}
Later, upon completion of the contention round (line 11 of the pseudocode), gNB observes the number of successful outcomes $s_i$ and updates the backlog estimation. At this moment, $p_{i+1}$ for the next cycle is decided, as the solution to the problem~\eqref{eq:p_solver}, where we use a priori backlog estimate $\hat{n}_{i+1}^-$ for $n_i$. This solution can be found numerically, and we will return to the complexity of the solution in Sec.~\ref{sec:complexity}.
This \textit{access probability} is then broadcast before the $(i+1)^{\text{th}}$ contention round.

\begin{algorithm}[t!]
{\footnotesize
	\caption{Pseudocode for Dynamic Binary Countdown - Access barring (DBCA): gNB View.}\label{alg:proposed_algorithm}
	\begin{algorithmic}[1]
		\State Initialize $i=0$, $\hat{n}^-_0 = 1$, $p_0 = 1$, $q_0 = 0$
		\For {every contention round $i$}
		\State Observe $M_i^I$\Comment{stage \textbf{II}}
		\State Compute $\Delta \hat{n}_i$ via Eqn.~\eqref{eq:delta_v_pseudoBayesian}
		\State $\hat{n}_i=\hat{n}_i^-+\Delta \hat{n}_i$\Comment{update a posteriori backlog estimate}
		\State Compute $k_i$ via Eqn.~\eqref{eq:k_solver}\Comment{stage \textbf{III}}
		\State Allocate resources for $k_i$ CRSs and MSG3s
		\If {$\Delta\hat{n}_i>0$}\Comment{stage \textbf{V}}
		\State $q_{i+1}=q_i+1$ \Comment{correction for bursty arrivals}
		\Else { $q_{i+1}=0$}
		\EndIf
		\State Observe successful MSG3 transmissions $s_i$
		\State $\hat{n}^-_{i+1} = \hat{n}^-_{i}+q_{i+1}\Delta\hat{n} - s_i$ \Comment{update a priori backlog estimate}
		\State Compute $p_{i+1}$ via Eqn.~\eqref{eq:p_solver} \Comment{stage \textbf{VI}}
		\EndFor
	\end{algorithmic}}
\end{algorithm}

\subsection{Estimating the Backlog $\hat{n}_i$}
In most of the practical cases, the size of the backlog at any time step $n_i$ is unknown to the gNB. Hence, we have to adapt the procedure in order to obtain an estimate of the backlog $\hat{n}_i$. There exist multiple state-of-the-art estimation techniques, all relying on the observation of each contention round outcomes, i.e., number of idle $M_i^{I}$ and occupied $M_i^{O}$ preambles. In this work, we adapt the pseudo-bayesian estimation from~\cite{jin2017recursive} to the joint binary-countdown access barring procedure.

The estimation of the backlog is reflected at two points in the algorithm: to decide the number of contention resolution slots (stage I) after observing the number of idle preambles $M_i^I$ (note that at this moment the number of successful UEs is unknown); and to decide the access probability for the $(i+1)^\text{th}$ contention round, after the number of successful UEs $s_i$ is known (stage V).

First, let us consider stage I. It calculates the \textit{a posteriori} estimation $\hat{n}_i$ as a function of the \textit{a priori} estimate $\hat{n}_i^-$ (which depends on the previous RA round estimation, hence its recursiveness) and the number of idle preambles $M_i^I$. It assumes the backlog size in the $i^\text{th}$ contention round is a Poisson random variable whose mean is the \textit{a priori} estimate $\hat{n}^-_i$ and calculates the correction~\cite{jin2017recursive}:
\begin{equation}
\label{eq:delta_v_pseudoBayesian}
\Delta \hat{n} =  p_i\hat{n}^-_i \left( e^{-\frac{p_i\hat{n}^-_i}{M}} - \frac{M_i^I}{M} \right) \left(1 - e^{-\frac{p_i\hat{n}^-_i}{M}} \right)^{-1},
\end{equation}
The \textit{a priori} estimation is then corrected:
\begin{equation}
\label{eq:correction_pseudoBayesian}
\hat{n}_t = \hat{n}^-_t + \Delta \hat{n}
\end{equation}

The stage V starts once the results of the complete $i^{\text{th}}$ RA contention round are obtained. A simple \textit{a priori} estimate for the next contention round $i+1$ is computed as in \cite{jin2017recursive}:
\begin{equation}
\label{eq:update_v}
\hat{n}^-_{i+1} = \hat{n}_{i} + \alpha_{i+1}^- - s_i
\end{equation}
where $\alpha_{i+1}^-$ is an \textit{a priori} estimation of the \textit{arrivals} during the next round. As we assume that no information about the arrivals distribution is available, we take $\alpha_{t+1}^-$ proportionally to the number of arrivals in the previous RA round and estimate it as $\alpha_{i} = \max(0, \Delta \hat{n}_i)$. As the estimation we use is an adaptation of the Enhanced Pseudo-Bayesian ACB algorithm from~\cite{jin2017recursive}, we also use a heuristic involving a ``boosting factor'' $q_{t+1}$ in the \textit{a priori} estimation to better adjust for the burst arrivals:
\begin{equation}
\label{eq:arrivals_estimator}
\alpha^-_{i+1} = q_{i+1}\cdot\alpha_{i} = q_{i+1}\cdot\max(0, \Delta \hat{n}_i)
\end{equation}

\subsection{Complexity}
\label{sec:complexity}
Clearly, the algorithm complexity is dominated by the line 13 of the pseudocode, where $p_{i+1}$ is computed as a solution to the problem~\eqref{eq:p_solver}. While this is a non-linear mixed integer optimization, it is possible to find a solution efficiently, considering that if we fix $k_i$, the resulting problem of finding optimal $p_i^\star$ has a unique solution. We state the second fact as Lemma~\ref{lemma:fixed_k_problem}.
\begin{lemma}
\label{lemma:fixed_k_problem}
Given fixed number of CRSs $k_i=\bar{k}$, and the number of UEs $n_{i}=\bar{n}>2$, the optimization problem
\begin{equation}
\label{eq:p_solver_fixed_k}
\begin{aligned}
\underset{p_{i}}{\text{\emph{max} }}S,\quad
\emph{s.t. } &R \leq \epsilon_R,\quad\:p_{i}\in (0,1],
\end{aligned}
\end{equation}
has a unique solution $p_i^\star$.
\end{lemma}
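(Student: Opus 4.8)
The plan is to show that $S$, regarded as a function of $p$ alone (we write $p$ for the variable $p_i$ in~\eqref{eq:p_solver_fixed_k}) with $\bar k$, hence $\bar l\triangleq 2^{\bar k}$, held fixed, is strictly unimodal on $(0,1]$, while the feasible set is a subinterval of $(0,1]$; uniqueness of the constrained maximizer then follows at once.

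First, some easy structural facts. Since $\tfrac{d}{dp}\bigl(M-M(1-p/M)^{\bar n}\bigr)=\bar n(1-p/M)^{\bar n-1}>0$, the resource function $R$ in~\eqref{eqn:res_consumption} is continuous and strictly increasing on $(0,1]$, with $R\to R_1$ as $p\to0^+$; hence, assuming feasibility ($\epsilon_R>R_1$), the feasible set $\{p\in(0,1]:R\le\epsilon_R\}$ is an interval $(0,b]$, where $b=\min\{1,p_{\max}\}$ and $p_{\max}$ is the unique root of $R=\epsilon_R$. Writing $x_h\triangleq 1-\tfrac{h}{\bar l}\tfrac{p}{M}\in(0,1)$, differentiation of~\eqref{eq:exp_S} gives
\begin{equation*}
\frac{\partial S}{\partial p}=\frac{\bar n}{\bar l}\bigl(\bar n\,A-(\bar n-1)\,D\bigr),\qquad A\triangleq\sum_{h=1}^{\bar l}x_h^{\bar n-1},\quad D\triangleq\sum_{h=1}^{\bar l}x_h^{\bar n-2},
\end{equation*}
so that $\operatorname{sign}(\partial S/\partial p)=\operatorname{sign}\bigl(A/D-\tfrac{\bar n-1}{\bar n}\bigr)$. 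Since $A/D\to1>\tfrac{\bar n-1}{\bar n}$ as $p\to0^+$, $S$ increases near $0$; more precisely, for $p\le M/\bar n$ every summand $x_h^{\bar n-2}(\bar n x_h-(\bar n-1))$ of $\partial S/\partial p$ is nonnegative, so $S$ is strictly increasing on $(0,\min\{1,M/\bar n\}]$. This already settles every case with $b\le M/\bar n$ (in particular all $\bar n\le M$, and all sufficiently tight resource budgets): there $S$ is strictly increasing on the whole feasible interval, and the unique maximizer is $p^\star=b$.

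The core step is to prove that $A/D$ is strictly decreasing on all of $(0,1]$. Given this, $A/D$ meets the level $\tfrac{\bar n-1}{\bar n}$ at most once and from above, so $\partial S/\partial p$ is positive and then (possibly) negative: $S$ is strictly increasing up to some $p^{\dagger}\in(0,1]$ and strictly decreasing afterwards, hence strictly quasiconcave on $(0,1]$. To establish the monotonicity, note that $A/D=\E_q[x]$ for the probability vector $q_h\propto x_h^{\bar n-2}$; writing $\bar x\triangleq\E_q[x]$ and using $\tfrac{h}{\bar l M}=\tfrac{1-x_h}{p}$, a short computation of $\tfrac{d}{dp}(A/D)$ reduces the claim to the scalar inequality $\bar x\,\E_q[1/x]<1+\tfrac{1-\bar x}{\bar n-2}$. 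Its left side minus $1$ equals $\E_q\bigl[(x-\bar x)^2/(x\bar x)\bigr]\le \mathrm{Var}_q(x)/(\bar x\,x_{\min})$ with $x_{\min}=1-p/M\ge1-1/M$, so it suffices that $(\bar n-2)\,\mathrm{Var}_q(x)<\bar x\,x_{\min}(1-\bar x)$. I expect this to be the main obstacle: the distribution-free bounds on $\mathrm{Var}_q(x)$ (Popoviciu, or the near-uniform estimate) are too weak for large $\bar n$, and one must exploit that the weights $q_h\propto x_h^{\bar n-2}$ concentrate geometrically --- with ratio $\le(x_2/x_1)^{\bar n-2}$ --- as $\bar n$ grows, so that $(\bar n-2)\mathrm{Var}_q(x)$ stays below $1-\bar x=\tfrac{p}{\bar l M}\E_q[h]$; this is most cleanly organised by cases according to the size of the concentration parameter $(\bar n-2)\tfrac{p}{\bar l M}$.

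Finally, granting strict quasiconcavity of $S$ on $(0,1]$, restrict to the feasible interval $(0,b]$. Extending $S$ continuously by $S(0)\triangleq0<S(b)$, the maximum over the compact set $[0,b]$ is attained and, since $S(0)<S(b)$, at a point $p^\star\in(0,b]$; a second maximizer is impossible, because a strict convex combination of two distinct maximizers would, by strict quasiconcavity, give a strictly larger value. Hence $p^\star$ is unique --- concretely $p^\star=\min\{b,p^{\dagger}\}$ --- which is the assertion.
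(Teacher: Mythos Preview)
Your reduction is sound: the feasible set is an interval $(0,b]$ because $R$ is strictly increasing, and strict quasiconcavity of $S$ on $(0,1]$ would indeed force a unique constrained maximizer. The derivative computation $\partial S/\partial p=\tfrac{\bar n}{\bar l}\bigl(\bar n A-(\bar n-1)D\bigr)$ is also correct. The trouble is precisely where you locate it, and it is worse than you expect: the assertion that $A/D$ is strictly decreasing on all of $(0,1]$ is \emph{false}, so the variance inequality you reduce it to cannot be established by any bookkeeping.

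A concrete counterexample: take $\bar l=2$, $M=1$, $\bar n=3$, so $x_1=1-p/2$, $x_2=1-p$ and
\[
\frac{A}{D}=\frac{(1-p/2)^2+(1-p)^2}{(1-p/2)+(1-p)}=\frac{2-3p+\tfrac54 p^2}{2-\tfrac32 p}.
\]
The sign of $(A/D)'$ equals the sign of $-3+5p-\tfrac{15}{8}p^2$, which is positive on roughly $(0.912,1]$; numerically $A/D\approx0.486$ at $p=0.96$ and $A/D\approx0.492$ at $p=0.98$. Hence $A/D$ turns around and increases near $p=1$. Equivalently, your target inequality $\bar x\,\E_q[1/x]<1+\tfrac{1-\bar x}{\bar n-2}$ fails there (at $p=0.98$ the left side is $\approx1.85$, the right $\approx1.51$). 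The concentration heuristic you sketch does not rescue this regime, because with $\bar n=3$ the weights $q_h\propto x_h$ do not concentrate.

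Note that the lemma itself survives the example: in this instance $A/D$ stays below $(\bar n-1)/\bar n=2/3$ once it has crossed it, so $S$ remains unimodal. The point is only that global monotonicity of $A/D$ is too strong a vehicle; you would need either an extra standing hypothesis such as $M\ge2$ (not assumed in the lemma) or a weaker statement about $A/D$ tailored to the single level $(\bar n-1)/\bar n$.

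For comparison, the paper (Appendix~\ref{app:lemma2}) takes a different route: it writes $S=y_o(p)\,g_o(p)$ with $y_o(p)=\tfrac{\bar n}{\bar l}p$ linear and $g_o(p)=\sum_h x_h^{\bar n-1}$ strictly decreasing and convex, assumes two local maxima for contradiction, and uses the first--order relation $g_o'(p_{i,j})=-g_o(p_{i,j})/p_{i,j}$ at critical points together with the sign pattern of the second derivative of $y_og_o$ and the monotonicity of $g_o$ to reach a contradiction. That argument never appeals to monotonicity of $A/D$, which is why it is not blocked by the example above.
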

\begin{proof}
See Appendix~\ref{app:lemma2}.
\end{proof}
The Lemma~\ref{lemma:fixed_k_problem} implies that for any fixed $k_i$, we can find optimal $p_i^\star$ fast with any numerical methods or local search, e.g., gradient descent. To further simplify the problem, we convert it to a root finding problem in Lemma~\ref{lemma:rootfinding}.
\begin{lemma}
	\label{lemma:rootfinding}
The problem defined by Eqn.~\eqref{eq:p_solver_fixed_k} can be equivalently solved by 
\begin{equation}
p^\star_i=\min\left(\frac{x^\star Ml_i}{n_i},p_{\max}\right),
\end{equation}
where $p_{\max}$ is given by~\eqref{eqn:const_pmax} and $x^\star$ found either as a root of
\begin{equation}
\left(1-x\right)+e^{-xl_i}\left(\left(1-xl_i\right)\left(e^{-x}-1\right)+x\right)-e^{-x} = 0,
\label{eqn:rootfinding}
\end{equation}
or as $x^\star=\frac{n_i}{Ml_i}$ if no roots exists for $x\in\left(0,\frac{n_i}{Ml_i}\right]$.
\end{lemma}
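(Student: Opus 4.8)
The plan is to strip off the resource constraint using monotonicity of $R$, collapse the throughput $S$ to a closed form via a change of variables combined with the standard Poisson (exponential) approximation, and then recognise~\eqref{eqn:rootfinding} as the first‑order stationarity condition of the resulting one‑dimensional problem.

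First I would observe that, for fixed $k_i$ and $n_i$, the resource function $R$ in~\eqref{eqn:res_consumption} is strictly increasing in $p_i$ (because $\left(1-p_i/M\right)^{n_i}$ is strictly decreasing), so the constraint $R\le\epsilon_R$ is equivalent to $p_i\le p_{\max}$, where $p_{\max}$ is the unit‑capped solution of $R=\epsilon_R$ given by~\eqref{eqn:const_pmax}; hence~\eqref{eq:p_solver_fixed_k} becomes the one‑variable problem $\max_{p_i\in(0,p_{\max}]}S$. Next I would substitute $x\triangleq\frac{n_ip_i}{Ml_i}$, so that $p_i=\frac{xMl_i}{n_i}$ and $p_i\in(0,1]$ corresponds to $x\in\bigl(0,\tfrac{n_i}{Ml_i}\bigr]$. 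Using $\left(1-a\right)^{n_i-1}\approx e^{-an_i}$ (justified since every $a=\tfrac{h}{l_i}\tfrac{p_i}{M}\le 1/M\ll1$), the $h$‑th summand of~\eqref{eq:exp_S} becomes $e^{-hx}$ and the prefactor $\tfrac{n_ip_i}{l_i}$ becomes $xM$, so the sum is geometric and
\[
S(x)\;\approx\;xM\sum_{h=1}^{l_i}e^{-hx}\;=\;xM\,\frac{e^{-x}\bigl(1-e^{-l_ix}\bigr)}{1-e^{-x}}.
\]

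The core step is to differentiate $S(x)$ and set the derivative to zero. Writing $S/M=g/h$ with $g(x)=x\bigl(e^{-x}-e^{-(l_i+1)x}\bigr)$ and $h(x)=1-e^{-x}$, the condition $S'(x)=0$ is $g'h-gh'=0$; I expect that after expanding, combining like powers of $e^{-x}$, and multiplying through by $e^{x}>0$, this simplifies exactly to~\eqref{eqn:rootfinding}. Since $S(x)\to0$ both as $x\to0^+$ and as $x\to+\infty$ while $S(x)>0$ in between, $S$ has an interior maximum; invoking the quasi‑concavity of $S$ in $p_i$ established in the proof of Lemma~\ref{lemma:fixed_k_problem}, this maximizer is the unique stationary point, hence the unique root $x^\star$ of~\eqref{eqn:rootfinding} on $(0,\infty)$. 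If $x^\star\in\bigl(0,\tfrac{n_i}{Ml_i}\bigr]$, it is the maximizer of $S$ over $p_i\in(0,1]$; if no root lies in that range, $S$ is increasing throughout the feasible $x$‑interval and the maximizer is the endpoint $p_i=1$, i.e. $x^\star=\tfrac{n_i}{Ml_i}$. Finally, since $S$ is increasing to the left of its peak $p^{\mathrm{unc}}=\frac{x^\star Ml_i}{n_i}$, its maximum over $(0,p_{\max}]$ is attained at $\min\bigl(p^{\mathrm{unc}},p_{\max}\bigr)$, which is the claimed formula.

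The hard part will be the claim that~\eqref{eqn:rootfinding} has exactly one relevant root and that it corresponds to a maximum rather than a spurious critical point of the approximated $S$; I would discharge this by leaning on the quasi‑concavity argument from the Appendix proof of Lemma~\ref{lemma:fixed_k_problem} together with the boundary values $S(0^{+})=S(+\infty)=0$. The algebraic reduction to~\eqref{eqn:rootfinding} is routine but bookkeeping‑heavy, so I would organise it by grouping terms according to the exponent of $e^{-x}$.
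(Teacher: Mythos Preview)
Your proposal is correct and follows essentially the same route as the paper: exponential approximation of the summand, the substitution $x=\tfrac{n_ip_i}{Ml_i}$, collapsing the sum to the geometric closed form, differentiating, and appealing to Lemma~\ref{lemma:fixed_k_problem} for uniqueness. The only procedural difference is that the paper takes the logarithm of $S(x)$ before differentiating, whereas you differentiate $S(x)$ directly via the quotient rule; since $S(x)>0$ on the interior, both yield the same stationarity condition, and your more explicit treatment of the constraint reduction to $p_i\le p_{\max}$ and of the boundary case $x^\star=\tfrac{n_i}{Ml_i}$ actually fills in details the paper leaves implicit.
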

\begin{proof}
See Appendix~\ref{app:lemma3}.
\end{proof}
Practically, a na\"{\i}ve Python implementation according to Lemma~\ref{lemma:rootfinding} based on scipy.optimize package~\cite{scipy} yields $20-35$~$\mu$s average execution time. Using the fact the objective function is increasing in $k_i$ and in any realistic implementation $k_i$ is upper-bounded by $k_{\max}$\footnote{In general, resource grid and resource management might impose different granularity constraints on the allocation of CRS. As this constraints are implementation specific and hard to model realistically, in this work we assume that $k_i$ could be allocated with granularity $1$, i.e, any number of slots up to $k_{\max}$ can be allocated.}, the optimal operating point $\left(p_i^\star,k_i^\star\right)$ can be found with a search over $[0,k_{\max}]$. Hence, the worst-case complexity of the step is upper bounded by $\mathcal{O}(k_{\max})$.
\begin{table}[t!]
	\renewcommand{\arraystretch}{1.3}
	\caption{Summary of simulation parameters.}
	\label{tab:simparams}
	\centering
	\begin{tabular}{|l||l|}
		\hline
		\hline
		Contention round / PRACH slot & $10$~ms\\
		\hline
		Preambles per slot $M$ & 54\\
		\hline
		Number of UEs $N$ & $500-10000$\\
		\hline
		Act. time $T_a$: uniform, beta & $1$~s / $100$~c.rounds\\
		\hline
		Act. time $T_a$: delta & 1~c.round\\
		\hline
		Beta distribution parameters $(\alpha,\beta)$ & (3,4)~\cite{3gpp37.868}\\
		\hline
		Resource constraint proportionality constant $C$ & $1.0-1.8$\\
		\hline
		Maximum number of CRSs $k_{\max}$ & 14 \\
		\hline
		CRS allocation granularity & 1\\
		\hline
		Resources per PRACH channel & $6$~RBs\\
		\hline
		Resources per MSG3 $r_3$ & $2$~RBs~\cite{jang2017non}\\
		\hline
		Single CRS relative overhead $\delta=r_{\text{CRS}}/r_3$& 0.07\\
		\hline
		CRS duration $t_{\text{CRS}}$ & 1 OFDM symbol\\
		\hline
	\end{tabular}
\end{table}

\begin{figure*}[t!]
	\centering
	\includegraphics[width=\textwidth]{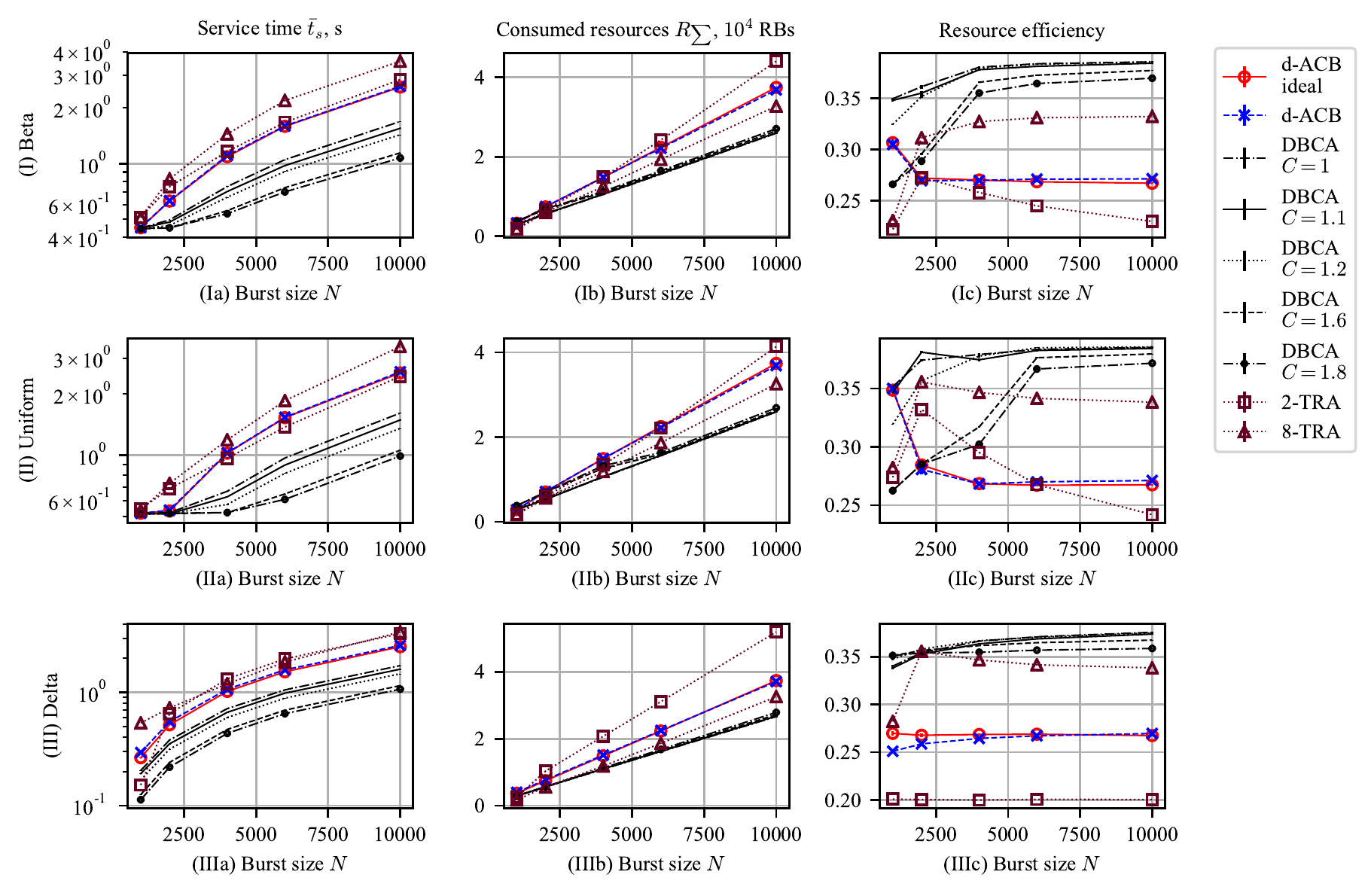}
	\caption{DBCA algorithm performance for different values of $C$ vs. d-ACB and $\{2,8\}$-ary TRA for three arrival scenarios: (I) Beta, (II) Uniform, and (III) Delta. The plots (Ia-IIIa) show mean service time $\bar{t}_s$ of UEs in a burst, (Ib-IIIb) average total consumed resources $R_{\sum}$ for a burst resolution, and plots (Ic-IIIc) show the resource efficiency $U$. The $x$-axis represents the burst size $N$. The $95$~\% confidence intervals do not exceed 1.1~\% of the mean. They are thus omitted to avoid visual clutter.}
	\label{fig:proposedperformance}
	\vspace{-0.3cm}
\end{figure*}

\section{Evaluation}
\label{sec:evaluation}
In this section, we present simulative evaluation of the performance of DBCA and compare it to the baseline of Dynamic Access Class Barring (d-ACB)~\cite{dacb} and $q$-ary Tree Resolution Algorithm ($q$-TRA)~\cite{madueno2014efficient} by the means of a custom event-based simulator. We present the simulation set-up (\ref{subsec:sim}), metrics (\ref{subsec:metric}), and finally the results (\ref{subsec:results}).

\subsection{Simulation Set-up}
\label{subsec:sim}

We simulate a burst arrival scenario, with three burst arrival distributions~\cite{3gpp37.868}: delta, uniform activation time distribution, and beta arrivals:
\begin{equation}
g_a(t)=\begin{cases}
\frac{t^{\alpha-1} (T_a-t)^{\beta-1} }{T_a^{\alpha+\beta-2} B (\alpha,\beta)}, & 0\leq t< T_a\quad\text{(if beta)},\\
\frac{1}{T_a} & 0\leq t< T_a\quad\text{(if uniform)},\\
1 & t=0\quad\text{(if delta)},\\
0 & \text{otherwise}.
\end{cases}
\label{eqn:arrivalprocess}
\end{equation}
where $B(\alpha,\beta)$ denotes the Beta function.

We simulate the four-way handshake of RA procedure using the collision channel model as defined by~\eqref{eqn:channelmodel}, hence, our simulation only captures MAC layer effects. The simulation is organized in contention rounds, where UEs are assumed to receive MSG4 if successful. If no MSG4 is received by the end of the contention round, UE assumes a collision. For simplicity, we assume that a contention round takes one PRACH slot, and a PRACH slot length is assumed to be equal to $10$~ms, which corresponds to one PRACH allocation per frame (e.g., configuration index 5 in LTE or 18 in NR)~\cite{3gpp36.211,3gpp38.211}. Considering a more practical model would potentially have a quantitative effect on the results, but it is left aside in order to obtain more illustrative performance evaluation. We choose an exemplary value of $k_{\max}=14$ by assuming that the amount of resources spent on BCCR is at most equal to the resources spent on MSG3, i.e., $k_{\max}\delta=R_3$, and the duration of one CRS equals to one OFDM symbol. The simulation parameters are summarized in the Table~\ref{tab:simparams}. We present average values obtained from at least 30 Monte-Carlo simulations for each data point, with $95$~\% confidence intervals not exceeding $1.1$~\% of the mean.
\begin{remark}
		We only evaluate our approach for the case of one burst without any background traffic and assume that the amount of background traffic is negligible, as is common in the literature, e.g.,~\cite{dacb,gursu2017hybrid,wei2015modeling}. Although DBCA is not optimized for the presence of background traffic, the estimation steps II and V would implicitly take it into account by over-estimating the number of back-logged UEs involved in a contention. Alternatively, if a localized burst arrival is detected or anticipated (e.g., during group paging), some preambles could be reserved specifically for the burst resolution and advertised in the system broadcast respectively~\cite{vilgelm2017latmapa}.
\end{remark}

\subsection{Performance Metrics}
\label{subsec:metric}
Three performance metrics are investigated: mean service time $\bar{t}_s\triangleq\sum_{j=1}^{N}t_s^j / N$ (time until a UE successfully completes RA procedure), mean consumed uplink resources $R_{\sum}$ throughout the whole burst resolution duration, and mean resource efficiency: successful outcomes, normalized by the consumed uplink resources, defined as $U\triangleq\frac{1}{T_{\text{BR}}}\sum_{i\in T_{\text{BR}}}s_i/\left(R_1+r_3(1+k_i\delta)M_i^O\right)$.

To provide a fair comparison, the per-contention period resource constraint is set proportional to the expected resource consumption of a d-ACB algorithm~\cite{dacb} under the same conditions:
$
\label{eq:res_const}
\epsilon_R \triangleq C \times R \left(\hat{n}_i, p^\star_{i}, k_i=0 \right),
$
where $C$ is the proportionality constant and $p_i^\star$ is the access probability maximizing the expected success rate for RA without BCCR as defined by~\eqref{eqn:pstar}. Intuitive meaning of proportionality constraint is the following. The case of $C=1$ corresponds to the case where the proposed algorithm DBCA cannot consume more resources per contention round than the baseline d-ACB, so it provides a fair comparison. The case of $C>1$ studies how can DBCA benefits from the additional resources, which d-ACB cannot make use of. It is not straightforward to enforce resource constraint on the TRA, therefore we simulate TRA without resource constraint, giving it an advantage. We chose to simulate TRA with branching factors $q\in\{2,8\}$. For illustrative purposes, we also include an ideal version of d-ACB in the evaluation, where a perfect knowledge of the state information is assumed.

\subsection{Results}
\label{subsec:results}

In Fig.~\ref{fig:proposedperformance}, we see how the proposed algorithm performs for different values of the proportionality constant $C$ compared to the baseline algorithms. From Figs.~\ref{fig:proposedperformance}(Ia-IIIa), we observe that DBCA provides lower average service time that the baseline for most of the arrival distributions. Only in the case of uniform arrivals with low load $N\leq2000$ UEs, DBCA performs similar to d-ACB with small service times for all algorithms. Overall, service times grow almost linearly with $N$ in high load regime for all arrival distributions and algorithms, whereas a non-linear behavior is noted for low-to-medium load $N\leq 4000$ UEs for uniform and beta arrivals. $q$-ary TRA with high branching factor $q=8$ does not provide significant advantage, while low branching $q=2$ performs well for uniform arrivals, however, still worse than DBCA for medium-to-high load.

In Figs.~\ref{fig:proposedperformance}(Ib-IIIb), we observe the relationship between total resource consumption and load (burst size). First of all, we note dominantly linear growth of resource consumption with the load, with less steep slopes for DBCA protocols. As a result, DBCA consumes similar amount of resources as the baselines in low load regimen, and significantly less in high load regimen. Even in the case where $C = 1$, where DBCA is constrained to consume no more resources than d-ACB per contention-round, DBCA still yields lower overall consumption for full burst resolution.
This follows since DBCA is capable of obtaining more throughput out of the same consumed resources as d-ACB, thus, wasting less resources to collisions. 
Interestingly, binary TRA consumes consistently more than other baselines, while $8$-ary -- consistently less. This is a counter-intuitive observation, however, it is easily explained: While higher branching factor provides sub-optimal throughput, it produces mostly idle preambles, and idle preambles consume significantly less resources than collided. This result is confirmed in the Figs.~\ref{fig:proposedperformance}(Ic-IIIc), where we see that $8$-ary TRA is very resource efficient. Overall, we observe that DBCA with low $C$ performs most efficient with $U\geq 0.35$ for medium-to-high load.

Another counter-intuitive observation is that the overall resource consumption of DBCA exhibits relatively low variation for the different values of resource constraint $CR$, especially for higher values of $N$. However, there is indeed a great difference in the mean service time, which is the lower the higher $C$ is. This is explained by the fact that, if we look into Fig.~\ref{fig:paretofrontier}, we see that the ratio $S/R$ is relatively steady along most of the Pareto frontier for high values of $n_i$, which is the condition where most of the resource consumption takes place. Thus, we can trade consuming higher amounts of resources for a shorter period of time, or lower amounts for a longer period of time, without severely affecting the resulting resource efficiency.

\section{Conclusion}
\label{sec:conclusions}

\subsection{Summary}
In this article, we are addressing the challenge of providing massive IoT support for dense 5G networks. We propose to aid 5G Random Access procedure with Binary Countdown Contention Resolution, allowing to resolve the contention prior to RRC Connection Request transmissions. We analyse the performance of joint ACB and BCCR operation. We further apply a framework for resource-aware RA optimization, and, based on it, propose a Dynamic Binary Countdown -- Access class barring (DBCA) for fast and resource efficient burst resolution. DBCA is benchmarked via an event-based simulation against other state-of-the-art solutions, such as dynamic ACB and $q$-ary tree resolution algorithms for different burst arrival processes. Our simulation results confirm that DBCA allows to connect bursts of UEs faster while also consuming less resources. The advantages of DBCA are especially prominent when large bursts of UEs occur.

\subsection{Future Work}
Since BCCR relies on all UEs listening to each other, its performance is at its best in highly dense networks, and this is the scenario we have targeted in this article. Future work could address the scenarios of partial overhearing, there the UEs are not always close to each other, and access BCCR gains for such scenarios. The impact of other propagation scenarios, e.g., near-far conditions and capture effects, is also an open question to be addressed by the future work. Furthermore, we have shown here that DBCA is resource efficient. In general, it can be expected that resource efficiency implies reduced power consumption, but further study is needed on how to extend the problem formulation to account for the power consumption of UEs~\cite{wang2018multi}. In this work, we have targeted 5G networks in sub-$6$~GHz spectrum, thus neglecting usage of mmWave and beam forming procedures. Applications of our approach to random access in mmWave could be an interesting challenge for future work~\cite{7010532}. Finally, as we show in Sec.~\ref{sec:paretooptimality}, BCCR asymptotically allows to achieve arbitrary low collision probability. This property could be utilized in for ultra-reliable low latency (URLLC) applications, to design RA procedure with reliability guarantees~\cite{vilgelm2018icc}.

\appendices
\section{Proof of Lemma~\ref{lemma:fixed_k_problem}}
\label{app:lemma2}
First, we prove that the unconstrained problem has only one solution. Consider the objective function as a product of two functions $S=f_o(p_i;\bar{k})\triangleq y_o(p_i)g_o(p_i)$, where $y_o(p_i)\triangleq\frac{\bar{n}}{l_i}p_i$, and $g_o(p_i)\triangleq\sum_{h=1}^{l_i}\left( 1 - \frac{h}{l_i} \frac{p_i}{M} \right)^{\bar{n}-1}$. The first and second order derivatives of these functions are:
\begin{align}
&\frac{dy_o}{dp_i}=\frac{\bar{n}}{l_i},\quad\frac{d^2y_o}{dp_i^2}=0,\label{eqn:yo_deriv}\\
&\frac{dg_o}{dp_i} = -\frac{(\bar{n}-1)}{l_iM}\sum_{h=1}^{l_i}h\left(1-\frac{hp_i}{l_iM}\right)^{\bar{n}-2},\label{eqn:go_deriv1}\\
&\frac{d^2g_o}{dp_i^2} = \frac{(\bar{n}-1)(\bar{n}-2)}{l_i^2M^2}\sum_{h=1}^{l_i}h^2\left(1-\frac{hp_i}{l_iM}\right)^{\bar{n}-3}.\label{eqn:go_deriv2}
\end{align}

Note that since the following holds: $\frac{dg_o}{dp_i}<0$, $\frac{d^2g_o}{dp_i^2}>0$, $g_o(p_i)$ is a convex and strictly decreasing function. Now we prove by contradiction that the function $f_o(p_i)$ has a single maximum. Assume that $f_o(p_i)$ has two maximums $p_{i,1}$ and $p_{i,3}$, that implies there is also has a minimum in $p_{i,2}$. Considering that $\frac{d(y_og_o)(p_{i,j})}{dp_i}=0$, $j\in\{1,2,3\}$, and Eqns.~\labelcref{eqn:yo_deriv,eqn:go_deriv1,eqn:go_deriv2}, we obtain:
\begin{align}
&\frac{dg_o(p_{i,j})}{dp_i}=-\frac{1}{p_{i,j}}g_o(p_{i,j})\label{eqn:go_first}\\
&\frac{d^2g_o(p_{i,j})}{dp_i^2}=\frac{g_o(p_{i,j})}{p_{i,j}}\left(\frac{1}{p_{i,j}}-1\right)\label{eqn:go_second}.
\end{align}

Using Eqns.~\eqref{eqn:go_first},~\eqref{eqn:go_second} we can derive the second derivative of the function $(y_og_o)(p_{i,j})$ as:
\begin{equation}
\frac{d^2(y_og_o)(p_{i,j})}{dp_i^2}=\underbrace{-\frac{\bar{n}}{l_i}}_{<0}g_o(p_{i,j})\underbrace{\left(\frac{1}{p_{i,j}}+1\right)}_{>0}.
\end{equation}
Following our assumption, we have $\frac{d^2(y_og_o)(p_{i,j})}{dp_i^2}<0$, $j\in\{1,3\}$, and $\frac{d^2(y_og_o)(p_{i,j})}{dp_i^2}>0$, $j=2$. This implies that $g_o(p_{i,1}),g_o(p_{i,3})>0$, and $g_o(p_{i,2})<0$. However, as $g_o(p_i)$ is a decreasing function, and $p_{i,1}<p_{i,2}<p_{i,3}$, we come to a contradiction. Hence, $f_o(p_i)$ has only one maximum, and the unconstrained problem~\eqref{eq:p_solver_fixed_k} has only one solution.

Next, consider the constraint function:
\begin{equation}
R \left(p_{i};\bar{n},\bar{k}\right) = R_1 + r_3(1+\bar{k}\delta)\left(M-M\left(1-\frac{p_i}{M}\right)^{\bar{n}}\right)\leq \epsilon.
\end{equation}
We can reformulate it as: 
\begin{align}
p_i \leq p_{\max},\label{eqn:const_pmax}\text{ with }p_{\max}=M - M\left(1-\frac{\epsilon-r_3}{Mr_3(1-\bar{k}\delta)}\right)^{\frac{1}{\bar{n}}}.
\end{align}
Hence, this constraint is a closed half-plane defined by a constant $p_{\max}$. Clearly, since $f_o(p_i)$ is a function, it has only one interception with~\eqref{eqn:const_pmax}. This implies that the constrained problem~\eqref{eq:p_solver} also has also only one solution.

\section{Proof of Lemma~\ref{lemma:rootfinding}}
\label{app:lemma3}
To obtain~\eqref{eqn:rootfinding}, we first apply Tailor series approximation $\left(1-\frac{hp_i}{l_iM_i}\right)^{n_i-1}\approx e^{-h\frac{n_ip_i}{Ml_i}}$ to the objective function given by Eqn.~\eqref{eq:exp_S}, then substitute $x=\frac{n_ip_i}{Ml_i}$, and finally simplify the sum as a partial sum of a geometric series, obtaining:
\begin{equation}
S = xM \sum_{h=1}^{l_i} e^{-hx}=xMe^{-x}\frac{1-e^{-xl_i}}{1-e^{-x}}.
\label{eqn:thr_approx}
\end{equation}
Instead of directly optimizing the expression, we apply logarithmic transformation to~\eqref{eqn:thr_approx}, and then obtain its derivative
\begin{eqnarray}
&&\frac{dS}{dx}= \nonumber\\
&&\frac{d\left(\log xM + \log e^{-x} + \log\left(1-e^{-xl_i}\right) - \log\left(1-e^{-x}\right)\right)}{dx}.\nonumber
\end{eqnarray}
By simplifying the equation and setting it to $0$, we obtain~\eqref{eqn:rootfinding}. Uniqueness of the root follows from Lemma~\ref{lemma:fixed_k_problem}.

\balance

\begin{thebibliography}{10}
	\providecommand{\url}[1]{#1}
	\csname url@samestyle\endcsname
	\providecommand{\newblock}{\relax}
	\providecommand{\bibinfo}[2]{#2}
	\providecommand{\BIBentrySTDinterwordspacing}{\spaceskip=0pt\relax}
	\providecommand{\BIBentryALTinterwordstretchfactor}{4}
	\providecommand{\BIBentryALTinterwordspacing}{\spaceskip=\fontdimen2\font plus
		\BIBentryALTinterwordstretchfactor\fontdimen3\font minus
		\fontdimen4\font\relax}
	\providecommand{\BIBforeignlanguage}[2]{{%
			\expandafter\ifx\csname l@#1\endcsname\relax
			\typeout{** WARNING: IEEEtran.bst: No hyphenation pattern has been}%
			\typeout{** loaded for the language `#1'. Using the pattern for}%
			\typeout{** the default language instead.}%
			\else
			\language=\csname l@#1\endcsname
			\fi
			#2}}
	\providecommand{\BIBdecl}{\relax}
	\BIBdecl
	
	\bibitem{al2015internet}
	A.~Al-Fuqaha, M.~Guizani, M.~Mohammadi, M.~Aledhari, and M.~Ayyash, ``{Internet
		of things: A survey on enabling technologies, protocols, and applications},''
	\emph{IEEE Communications Surveys \& Tutorials}, vol.~17, no.~4, pp.
	2347--2376, 2015.
	
	\bibitem{osseiran2014scenarios}
	A.~Osseiran, F.~Boccardi, V.~Braun, K.~Kusume, P.~Marsch \emph{et~al.},
	``{Scenarios for 5G mobile and wireless communications: the vision of the
		METIS project},'' \emph{IEEE Communications Magazine}, vol.~52, no.~5, pp.
	26--35, 2014.
	
	\bibitem{3gpp37.868}
	3GPP, ``{Technical Specification Group Radio Access Network; Study on RAN
		Improvements for Machine Type Communications},'' {3rd Generation Partnership
		Project (3GPP)}, TR {37.868}, 2011.
	
	\bibitem{hasan2013random}
	M.~Hasan, E.~Hossain, and D.~Niyato, ``{Random access for machine-to-machine
		communication in LTE-advanced networks: issues and approaches},'' \emph{IEEE
		Communications Magazine}, vol.~51, no.~6, pp. 86--93, 2013.
	
	\bibitem{firstBatch1988}
	I.~Cidon and M.~Sidi, ``{Conflict multiplicity estimation and batch resolution
		algorithms},'' \emph{IEEE Transactions on Information Theory}, vol.~34,
	no.~1, pp. 101--110, Jan 1988.
	
	\bibitem{dacb}
	S.~Duan, V.~Shah-Mansouri, Z.~Wang, and V.~W. Wong, ``{D-ACB: Adaptive
		congestion control algorithm for bursty M2M traffic in LTE networks},''
	\emph{IEEE Transactions on Vehicular Technology}, vol.~65, no.~12, pp.
	9847--9861, 2016.
	
	\bibitem{PIMRCpaper}
	M.~Vilgelm, S.~Rueda Li\~nares, and W.~Kellerer, ``{Enhancing Cellular M2M
		Random Access with Binary Countdown Contention Resolution},'' in \emph{Proc.
		IEEE Int. Symposium on Personal, Indoor, and Mobile Radio Communications
		(PIMRC)}.\hskip 1em plus 0.5em minus 0.4em\relax IEEE, 2017.
	
	\bibitem{newCollision}
	D.~Magrin, C.~Pielli, {\v{C}}.~Stefanovi{\'c}, and M.~Zorzi, ``Enabling {LTE}
	{RACH} collision multiplicity detection via machine learning,'' \emph{CoRR},
	vol. pp, abs/1805.11482, 2018.
	
	\bibitem{rom2012multiple}
	R.~Rom and M.~Sidi, \emph{Multiple access protocols: performance and
		analysis}.\hskip 1em plus 0.5em minus 0.4em\relax Springer Science \&
	Business Media, 2012.
	
	\bibitem{andreev2015understanding}
	S.~Andreev, O.~Galinina, A.~Pyattaev, M.~Gerasimenko, T.~Tirronen
	\emph{et~al.}, ``{Understanding the IoT connectivity landscape: a
		contemporary M2M radio technology roadmap},'' \emph{IEEE Communications
		Magazine}, vol.~53, no.~9, pp. 32--40, 2015.
	
	\bibitem{tyagi2015impact}
	R.~R. Tyagi, F.~Aurzada, K.-D. Lee, S.~G. Kim, and M.~Reisslein, ``Impact of
	retransmission limit on preamble contention in lte-advanced network,''
	\emph{IEEE Systems Journal}, vol.~9, no.~3, pp. 752--765, 2015.
	
	\bibitem{7021909}
	R.~R. Tyagi, F.~Aurzada, K.~D. Lee, and M.~Reisslein, ``{Connection
		Establishment in LTE-A Networks: Justification of Poisson Process
		Modeling},'' \emph{IEEE Systems Journal}, vol.~11, no.~4, pp. 2383--2394, Dec
	2017.
	
	\bibitem{6786066}
	D.~Niyato, P.~Wang, and D.~I. Kim, ``{Performance Modeling and Analysis of
		Heterogeneous Machine Type Communications},'' \emph{IEEE Transactions on
		Wireless Communications}, vol.~13, no.~5, pp. 2836--2849, May 2014.
	
	\bibitem{vilgelm2017impact}
	M.~Vilgelm and W.~Kellerer, ``{Impact of request aggregation on machine type
		connection establishment in LTE-Advanced},'' in \emph{Proc. IEEE Wireless
		Communications and Networking Conference (WCNC)}.\hskip 1em plus 0.5em minus
	0.4em\relax IEEE, 2017, pp. 1--6.
	
	\bibitem{misic2016shout}
	J.~Misic and V.~B. Misic, ``{To Shout or Not to Shout: Performance of Power
		Ramping during Random Access in LTE/LTE-A},'' in \emph{Proc. IEEE GLOBECOM},
	2016, pp. 1--6.
	
	\bibitem{gharbieh2016tractable}
	M.~Gharbieh, H.~ElSawy, A.~Bader, and M.-S. Alouini, ``{Tractable stochastic
		geometry model for IoT access in LTE networks},'' in \emph{Proc. IEEE
		GLOBECOM}, 2016, pp. 1--7.
	
	\bibitem{wei2015modeling}
	C.-H. Wei, G.~Bianchi, and R.-G. Cheng, ``{Modeling and analysis of random
		access channels with bursty arrivals in OFDMA wireless networks},''
	\emph{IEEE Transactions on Wireless Communications}, vol.~14, no.~4, pp.
	1940--1953, 2015.
	
	\bibitem{jian2017random}
	X.~Jian, Y.~Liu, Y.~Wei, X.~Zeng, and X.~Tan, ``Random access delay
	distribution of multichannel slotted {ALOHA} with its applications for
	machine type communications,'' \emph{IEEE Internet of Things Journal},
	vol.~4, no.~1, pp. 21--28, 2017.
	
	\bibitem{koseoglu2016lower}
	M.~Koseoglu, ``{Lower bounds on the LTE-A average random access delay under
		massive M2M arrivals},'' \emph{IEEE Transactions on Communications}, vol.~64,
	no.~5, pp. 2104--2115, 2016.
	
	\bibitem{cheng2015modeling}
	R.-G. Cheng, J.~Chen, D.-W. Chen, and C.-H. Wei, ``Modeling and analysis of an
	extended access barring algorithm for machine-type communications in {LTE-A}
	networks,'' \emph{IEEE Transactions on Wireless Communications}, vol.~14,
	no.~6, pp. 2956--2968, 2015.
	
	\bibitem{8048578}
	I.~Leyva-Mayorga, L.~Tello-Oquendo, V.~Pla, J.~Martinez-Bauset, and
	V.~Casares-Giner, ``{On the Accurate Performance Evaluation of the LTE-A
		Random Access Procedure and the Access Class Barring Scheme},'' \emph{IEEE
		Transactions on Wireless Communications}, vol.~16, no.~12, pp. 7785--7799,
	Dec 2017.
	
	\bibitem{vilgelm2018icc}
	M.~Vilgelm, S.~Schiessl, H.~Al-Zubaidy, W.~Kellerer, and J.~Gross, ``{On the
		Reliability of LTE Random Access: Performance Bounds for Machine-to-Machine
		Burst Resolution Time },'' in \emph{Proc. IEEE International Conference on
		Communications (ICC)}, May 2018.
	
	\bibitem{vilgelm2017latmapa}
	M.~Vilgelm, H.~M. G{\"u}rsu, W.~Kellerer, and M.~Reisslein, ``{LATMAPA:
		Load-Adaptive Throughput-MAximizing Preamble Allocation for Prioritization in
		5G Random Access},'' \emph{IEEE Access}, vol.~5, pp. 1103--1116, 2017.
	
	\bibitem{rivest1987network}
	R.~Rivest, ``{Network control by Bayesian broadcast},'' \emph{IEEE Transactions
		on Information Theory}, vol.~33, no.~3, pp. 323--328, 1987.
	
	\bibitem{jin2017recursive}
	H.~Jin, W.~Toor, B.~C. Jung, and J.~B. Seo, ``{Recursive Pseudo-Bayesian Access
		Class Barring for M2M Communications in LTE Systems},'' \emph{IEEE
		Transactions on Vehicular Technology}, vol.~66, no.~9, pp. 8595--8599, Sep
	2017.
	
	\bibitem{6093905}
	S.~Y. Lien, T.~H. Liau, C.~Y. Kao, and K.~C. Chen, ``Cooperative access class
	barring for machine-to-machine communications,'' \emph{IEEE Transactions on
		Wireless Communications}, vol.~11, no.~1, pp. 27--32, January 2012.
	
	\bibitem{ko2012novel}
	K.~S. Ko, M.~J. Kim, K.~Y. Bae, D.~K. Sung, J.~H. Kim, and J.~Y. Ahn, ``A novel
	random access for fixed-location machine-to-machine communications in ofdma
	based systems,'' \emph{IEEE Communications Letters}, vol.~16, no.~9, pp.
	1428--1431, 2012.
	
	\bibitem{jang2014spatial}
	H.~S. Jang, S.~M. Kim, K.~S. Ko, J.~Cha, and D.~K. Sung, ``{Spatial group based
		random access for M2M communications},'' \emph{IEEE Communications Letters},
	vol.~18, no.~6, pp. 961--964, 2014.
	
	\bibitem{kim2015enhanced}
	T.~Kim, H.~S. Jang, and D.~K. Sung, ``{An enhanced random access scheme with
		spatial group based reusable preamble allocation in cellular M2M networks},''
	\emph{IEEE Communications Letters}, vol.~19, no.~10, pp. 1714--1717, 2015.
	
	\bibitem{pratas2016random}
	N.~K. Pratas, C.~Stefanovic, G.~C. Madue{\~n}o, and P.~Popovski, ``{Random
		access for machine-type communication based on Bloom filtering},'' in
	\emph{Proc. IEEE GLOBECOM}, 2016, pp. 1--7.
	
	\bibitem{wiriaatmadja2015hybrid}
	D.~T. Wiriaatmadja and K.~W. Choi, ``{Hybrid random access and data
		transmission protocol for machine-to-machine communications in cellular
		networks},'' \emph{IEEE Transactions on Wireless Communications}, vol.~14,
	no.~1, pp. 33--46, 2015.
	
	\bibitem{7823034}
	Y.~D. Beyene, R.~Jäntti, and K.~Ruttik, ``Random access scheme for sporadic
	users in 5g,'' \emph{IEEE Transactions on Wireless Communications}, vol.~16,
	no.~3, pp. 1823--1833, March 2017.
	
	\bibitem{capetanakis1977multiple}
	J.~I. Capetanakis, ``{The multiple access broadcast channel: protocol and
		capacity considerations.}'' Ph.D. dissertation, Massachusetts Institute of
	Technology, 1977.
	
	\bibitem{tsybakov1978free}
	B.~S. Tsybakov and V.~A. Mikhailov, ``{Free synchronous packet access in a
		broadcast channel with feedback},'' \emph{Problemy Peredachi Informatsii},
	vol.~14, no.~4, pp. 32--59, 1978.
	
	\bibitem{madueno2014efficient}
	G.~C. Madueno, {\v{C}}.~Stefanovi{\'c}, and P.~Popovski, ``{Efficient LTE
		access with collision resolution for massive M2M communications},'' in
	\emph{Proc. IEEE GLOBECOM Workshops}, 2014, pp. 1433--1438.
	
	\bibitem{gursu2017hybrid}
	H.~M. G\"ursu, M.~Vilgelm, W.~Kellerer, and M.~Reisslein, ``{Hybrid Collision
		Avoidance-Tree Resolution for M2M Random Access},'' \emph{IEEE Transactions
		on Aerospace and Electronic Systems}, vol.~53, no.~4, pp. 1974--1987, Aug
	2017.
	
	\bibitem{gehrsitz2014priority}
	T.~Gehrsitz, R.~Durner, H.~Kellermann, H.-T. Lim, and W.~Kellerer,
	``{Priority-based energy-efficient MAC protocols for the in-car power line
		communication},'' in \emph{Proc. IEEE Vehicular Networking Conference
		(VNC)}.\hskip 1em plus 0.5em minus 0.4em\relax IEEE, 2014, pp. 61--68.
	
	\bibitem{you2003new}
	T.~You, C.-H. Yeh, and H.~Hassanein, ``{A new class of collision prevention MAC
		protocols for wireless ad hoc networks},'' in \emph{Proc. IEEE International
		Conference on Communications (ICC)}, vol.~2.\hskip 1em plus 0.5em minus
	0.4em\relax IEEE, 2003, pp. 1135--1140.
	
	\bibitem{hashiura2010performance}
	K.~Hashiura and H.~Habuchi, ``{Performance evaluation of the vehicular ad-hoc
		network using the modified binary countdown scheme},'' in \emph{Proc.
		Asia-Pacific Conference on Communications (APCC)}.\hskip 1em plus 0.5em minus
	0.4em\relax IEEE, 2010, pp. 352--356.
	
	\bibitem{baiocchi2017random}
	A.~Baiocchi, I.~Tinnirello, D.~Garlisi, and A.~Lo~Valvo, ``{Random Access with
		Repeated Contentions for Emerging Wireless Technologies},'' in \emph{Proc.
		IEEE INFOCOM}, 2017, pp. 1--9.
	
	\bibitem{sen2011no}
	S.~Sen, R.~Roy~Choudhury, and S.~Nelakuditi, ``No time to countdown: Migrating
	backoff to the frequency domain,'' in \emph{Proc. Int. Conference on Mobile
		computing and networking}.\hskip 1em plus 0.5em minus 0.4em\relax ACM, 2011,
	pp. 241--252.
	
	\bibitem{dahlman20185g}
	E.~Dahlman, S.~Parkvall, and J.~Skold, \emph{{5G NR: The next generation
			wireless access technology}}.\hskip 1em plus 0.5em minus 0.4em\relax Academic
	Press, 2018.
	
	\bibitem{miettinen2008introduction}
	K.~Miettinen, ``{Introduction to multiobjective optimization: Noninteractive
		approaches},'' in \emph{Multiobjective optimization}.\hskip 1em plus 0.5em
	minus 0.4em\relax Springer, 2008, pp. 1--26.
	
	\bibitem{scipy}
	\BIBentryALTinterwordspacing
	E.~Jones, T.~Oliphant, P.~Peterson \emph{et~al.}, ``{SciPy}: Open source
	scientific tools for {Python},'' 2001--. [Online]. Available:
	\url{http://www.scipy.org/}
	\BIBentrySTDinterwordspacing
	
	\bibitem{jang2017non}
	H.~S. Jang, H.-S. Park, and D.~K. Sung, ``{A non-orthogonal resource allocation
		scheme in spatial group based random access for cellular M2M
		communications},'' \emph{IEEE Transactions on Vehicular Technology}, vol.~66,
	no.~5, pp. 4496--4500, 2017.
	
	\bibitem{3gpp36.211}
	3GPP, ``{Technical Specification Group Radio Access Network; Evolved Universal
		Terrestrial Radio Access (E-UTRA); Physical channels and modulation (Release
		15)},'' {3rd Generation Partnership Project (3GPP)}, TS {36.211}, Sep 2018.
	
	\bibitem{3gpp38.211}
	------, ``{Technical Specification Group Radio Access Network; New Radio;
		Physical channels and modulation (Release 15);},'' {3rd Generation
		Partnership Project (3GPP)}, TS {38.211}, Dec 2018.
	
	\bibitem{wang2018multi}
	Y.~Wang, Y.~Wu, F.~Zhou, Z.~Chu, Y.~Wu, and F.~Yuan, ``Multi-objective resource
	allocation in a noma cognitive radio network with a practical non-linear
	energy harvesting model,'' \emph{IEEE Access}, vol.~6, pp. 12\,973--12\,982,
	2018.
	
	\bibitem{7010532}
	C.~{Jeong}, J.~{Park}, and H.~{Yu}, ``Random access in millimeter-wave
	beamforming cellular networks: issues and approaches,'' \emph{IEEE
		Communications Magazine}, vol.~53, no.~1, pp. 180--185, Jan 2015.
	
\end{thebibliography}

\end{document}